\newtheorem{theorem}{Theorem}[section]
\newtheorem{lemma}[theorem]{Lemma}
\def\be{\begin{equation}}
\def\ee{\end{equation}}
\def\beq{\begin{equation*}}
\def\eeq{\end{equation*}}
\def\ba{\begin{aligned}}
\def\ea{\end{aligned}}
\def\ov{\overline}
\begin{document}
\title{Generalized Dirac bracket and the role of the Poincar\'e symmetry \break in the program of canonical quantization of fields 1}

\author {Marcin Ka\'zmierczak}
\email{marcin.kazmierczak@fuw.edu.pl}
\affiliation{Institute of Theoretical Physics, University of Warsaw
  ul. Ho\.{z}a 69, 00-681 Warszawa, Poland}

\begin{abstract}
An elementary presentation of the methods for the canonical quantization of constraint systems with Fermi variables is given. The emphasis is on the subtleties of the construction of an appropriate classical bracket that could be consistently replaced by commutators or anti--commutators of operators, as required by canonical quantization procedure for bosonic and fermionic degrees of freedom respectively. I present a consequent canonical quantization of the Dirac field, in which the role of Poincar\'e invariance is made marginal. This simple example provides an introduction to the Poincar\'e--free quantization of spinor electrodynamics in the second part of the paper. 
\end{abstract}
 
\pacs{}

\maketitle

\section{Introduction}\label{intr}

The canonical quantization scheme can be briefly described as follows: to find a quantum version of a given classical theory, one formulates this classical theory in the Hamiltonian framework in terms of Poisson brackets. Then the classical dynamical variables are promoted to the operators and the Poisson brackets to the commutators (up to the factor $i\hbar$). The space of states is obtained by looking for representations of the commutation relations of physically important observables in a Hilbert space.

The usefulness of these method of quantization was evident from the very beginning of quantum theory. The historic monograph by Dirac \cite{Dirac1} contains a beautiful presentation of the underlying motivations. With the realization that the theories of particles, both massless and massive, need to be viewed as quantum field theories, the attempts of canonical quantization of field--theoretical systems began.
 Although the extension of the formalism to the uncountable number of degrees of freedom did not cause much trouble (except for the usual difficulties in mathematically rigorous formulation), there were two other features of physically relevant field theories that were problematic. 

One of them was related to the occurrence of degenerate Lagrangians that lead to the presence of constraints in the Hamiltonian formalism. These problems were partially resolved already by Dirac \cite{Dirac2}. It appeared that the constraints may be of two kinds. One of them leads to the presence of gauge freedom in a physical system and the other to the necessity of replacing the conventional Poisson bracket by the new classical bracket (the Dirac bracket) in the quantization procedure.

 The second problem was bound up with the relation between spin and statistics. As explained in classical refferences \cite{Wein}, the field operators for particles with half--integer values of spin that are evaluated at spatially separated space--time points should anti--commute, rather then commute. This poses a problem for the canonical quantization scheme, since both the Poisson bracket and the Dirac bracket are antisymmetric in their arguments and hence cannot be consistently replaced by symmetric anti--commutators of operators. The solution to this problem was proposed by Bellinfante {\it at all.} \cite{Bell}. The methods discussed there acquired a rigorous mathematical formulation in the papers by 
Casalbouri \cite{Casal}\cite{Casal2}. It appeared that it was necessary to introduce two kinds of classical variables, whose multiplication is not necessarily commutative. Specifically, the multiplication of the so called even--type variables with all the others is commutative and the multiplication of odd--type ones between themselves is anti--commutative. After these Grassman variables are introduced, it is possible to define the generalized Poisson bracket that is symmetric whenever both variables are odd and anti--symmetric otherwise. Also, the bracket possesses other important algebraic properties. The introduction of Grassman variables in the presence of the constraints is discussed in \cite{Casal2}.

More recently, the discussion of constrained systems with Fermi variables can be found in \cite{HT} and \cite{JHP}, but the authors of these references decided to focus on general considerations, rather than the applications of the formalism they developed. Handbooks of quantum field theory either restrain from discussing the canonical quantization of constrained systems at all \cite{Peskin} in favor of path integral approach, or present the discussion of the Dirac bracket that is relevant for bosonic systems only \cite{Wein}. The anti--commutation relations for the Dirac field are then derived from
the abstract group theoretical arguments that invoke to the Poincar\'e symmetry of space--time and the assumed Lorentz transformation properties of the Dirac field, as well as the discrete symmetries and the causality arguments. The canonical anti--commutation relations are then not necessary. Although this approach is indisputably elegant, it has a failure of not being expendable to the case of curved space--time, which does not posses the Poincar\'e symmetry. On the other hand,
the advantage of the canonical quantization is that the Poincar\'e symmetry does not have to be employed at all. Not only the space--time metric does not need to be flat -- it is not necessary at all in the quantization procedure.
Indeed, much of the motivations 
underlying the development of canonical methods was related to the attempts to quantize gravity in a background independent way.
See e.g. \cite{Thiemann} for the presentation of one of the advanced manifestations of these attempts. One of the points raised in this reference was the prevalence of the Poincar\'e symmetry and the frequent usage of the Minkowski metric in standard quantum field theory (see p.3 of \cite{Thiemann}), e.g. when formulated in terms of Wightman axiomatics. The author finds these properties of QFT an important obstacle to the straightforward extension to the gravitational case. One of the aims of my article is to show that the Poincar\'e symmetry and the Minkowski metric need not be used {\it almost} at all, if the canonical quantization scheme is consequently followed (the meaning of {\it almost} will become clear during the presentation). In this first paper of the series I will review the general formalism for canonical analysis of constraint systems with Fermi degrees of freedom. Than I will focus on the theory of the free Dirac field, which is the simplest physically relevant example containing both the difficulties that were mentioned.  In the forthcoming article the electromagnetic interaction will be turned on and then the issue of gauge invariance and gauge fixing will be addressed in more details.

\section{The canonical analysis of constrained systems with finite number of degrees of freedom}\label{general}

\subsection{Classical mechanics of constrained systems}

Our starting point is a classical theory in a Lagrangian formulation, whose equations of motion are defined by the stationarity condition for the action functional
\be\label{SL}
S[q(t)]=\int_{t_1}^{t_2} L(q,\dot q)dt \quad ,
\ee
where $q=(q^1,\dots q^N)$ represents the positions. In order to pass to Hamiltonian formulation, one defines canonical momenta as functions of positions and velocities
\be\label{momenta}
p_n:=\frac{\partial L}{\partial \dot q^n}(q,\dot q) , \quad n=1\dots N
\ee
and the canonical Hamiltonian
\be\label{H}
H:=p_n \dot q^n-L
\ee
(here and further in the article the Einstein summation convention applies whenever the indexes repeat, unless otherwise stated). Although $H$ is originally a function of positions and velocities on account of (\ref{momenta}), it can be expressed  as a function of positions and momenta instead, since its variation 
\be
\delta H=\left({p_n-\frac{\partial L}{\partial \dot q^n}}\right)\delta \dot q^n + \dot q^n \delta p_n - \frac{\partial L}{\partial q^n} \delta q^n = \dot q^n \delta p_n - \frac{\partial L}{\partial q^n} \delta q^n
\ee
depends on $\delta\dot q$ only through the combinations $\delta p_n=\frac{\partial^2 L}{\partial q^{n'}\partial \dot q^n}\delta q^{n'}+\frac{\partial^2 L}{\partial \dot q^{n'}\partial \dot q^n}\delta \dot q^{n'}$. An expression for $H$ as a function of the $q$'s and the $p$'s, although always exists, needs not be unique. To find one in practice, one could wish to calculate the velocities as functions of positions and momenta from (\ref{momenta}) and insert the result into (\ref{H}). However, the condition for this to be possible for all the $\dot q$'s is that the determinant of the matrix $\partial^2 L/\partial\dot q^{n'}\partial\dot q^n$ does not vanish. If this condition fails to hold, which is the case for many systems of direct physical importance, then one cannot compute form (\ref{momenta}) {\it all} of the $\dot q$'s as functions of the $q$'s and the $p$'s. This is however not necessary, since in this case (\ref{momenta}) implies some relations between the $q$'s and the $p$'s of the form
\be\label{primary}
\phi_m(q,p)=0, \quad m=1,\dots,M\quad 
\ee
which make the dependence of $H$ on the remaining velocities vanish.
These conditions, called the {\it primary constraints}, define a submanifold in phase space, called the {\it primary constraint surface}. The canonical Hamiltonian is well defined only as a function on this surface. It is therefore allowed to use (\ref{primary}) when expressing $H$ as a function of the $q$'s and the $p$'s, thus obtaining many equivalent expressions for $H$.

The Hamilton equations of motion that are equivalent to the Euler--Lagrange equations of (\ref{SL}) are given by
\be\label{HamEq}
\begin{aligned}
&\dot q^n=\frac{\partial H}{\partial p_n}+u^m\frac{\partial \phi_m}{\partial p_n} ,\\
&\dot p_n=-\frac{\partial H}{\partial q^n}-u^m \frac{\partial \phi_m}{\partial q^n} , \\
&\phi_m=0 ,
\end{aligned}
\ee
where the $u$'s are functions on phase space whose dependence on the $q$'s and the $p$'s should be determined in such a way that the system of equations (\ref{HamEq}) have a solution (if this is not possible than it means that the system of Euler--Lagrange equations of (\ref{SL}) was contradictory) . See  \cite{HT} for the proof of the equivalence of (\ref{HamEq}) and the Euler--Lagrange equations. For our purposes it is important to note that the Hamilton equations can be rewritten as
\be\label{HamEq1}
\dot q^n=[q^n,H_T]_{P} ,\quad \dot p_n=[p_n,H_T]_{P} ,\quad \phi_m = 0 ,
\ee
where 
\be\label{H_T}
H_T:=H+u^m\phi_m
\ee
is called the {\it total Hamiltonian} and $[,]_{P}$ stands for the usual Poisson bracket (PB) which is defined for any dynamical variables $F(q,p)$ and $G(q,p)$ by
\be
[F,G]_{P}=\frac{\partial F}{\partial q^n}\frac{\partial G}{\partial p_n}-\frac{\partial G}{\partial q^n}\frac{\partial F}{\partial p_n} .
\ee
Note that it is not necessary to calculate the PB between $u$'s and the $q$'s and the $p$'s in (\ref{HamEq1}), since all the brackets containing $u$'s will be multiplied by the constraints, and thus this components will vanish on account of the last equation of (\ref{HamEq1}). The evolution equations for any dynamical variable $F(q,p)$ can now be expressed in an extremely simple form
\be\label{dotFH_T}
\dot F\approx [F,H_T]_{P} , 
\ee 
where the {\it weak equality} symbol $\approx$ means that the equality holds if the constraints are imposed on the final form of the expressions (note that the constraints cannot be imposed on $H_T$ before its Poisson bracket with $F$ is calculated!).  

In order to quantize the theory, one needs to be aware of {\it all} the constraints, i. e. it is necessary to find all the independent relations of the form $C(p,q)=0$ that are satisfied at any instant of time. In general, the primary constraints $\phi_m$ will not provide a complete set of such relations. Additional constraints can follow from the requirement that $\phi_m$'s are preserved in time:
\be\label{constrpres}
\dot \phi_m\approx [\phi_m,H_T]_{P}\approx [\phi_m,H]_P+u^{m'}[\phi_m,\phi_{m'}]_{P}\approx 0 .
\ee
For e given $m$, this equation can provide a restriction on $u$'s (which happens whenever there exists a constraint that does not commute with $\phi_m$), or it can yield another constraint, which may be independent of $\phi_m$'s or not. If an independent constraint is obtained, one should require it to be preserved in time as well. This can yield yet another independent constraints or farther restrictions on $u$'s. The procedure ought to be continued until all the constraints are found. The additional constraints obtained in this way are called {\it secondary}. Following \cite{HT}, I shall denote the set of all the constraints by 
\be
\phi_j, \quad j=1,\dots,J ,
\ee
where $J-M$ is the number of secondary constraints. At the end, these constraints need to satisfy
\be\label{ueq}
\dot \phi_j\approx [\phi_j,H]_P+u^{m'}[\phi_j,\phi_{m'}]_{P}\approx 0 
\ee
(if they don't, more constraints or restrictions on $u$'s are needed). At any fixed point $(p,q)$ of the phase space, (\ref{ueq}) can be viewed as a system of linear equations for $u$'s. The general solution is provided by
\be\label{usol}
u^m=U^m+V^m , \quad V^m=v^a V^m_a ,
\ee
where $U$ is a particular solution of the inhomogeneous system, $V$ is the general solution of the homogeneous part, $V_a$ constitute an arbitrary basis for the space of solutions of the homogeneous equation and $v^a$ are completely arbitrary functions of time.  

A dynamical variable $F(q,p)$ is called {\it first class} if it commutes\footnote{Commutation means vanishing of the Poisson bracket in the case of bosonic variables and vanishing of the generalized Poisson bracket (discussed later) for fermionic ones.} with all the constraints. It is easy to verify that a modified Hamiltonian
\be
H'=H+U^m\phi_m ,
\ee
called the {\it first class Hamiltonian}, is first class. Note that $U$ is a particular solution of (\ref{ueq}) which can be chosen in a definite form, hence there are no arbitrary functions in $H'$, although one can construct $H'$ in meny different ways by choosing different particular solutions $U$. The total Hamiltonian (\ref{H_T}) that determines the dynamics of the system can now be rewritten in the form from which some of the arbitrary functions have been eliminated
\be\label{H_T2}
H_T=H'+v^a\phi_a ,\quad \phi_a:=V_a^m\phi_m .
\ee
The arbitrary functions $v^a$ remain present in the final form of the dynamical equations and indicate the presence of gauge freedom in the system, as explained below.

All the constraints can be separated into first class and second class constraints \footnote{A phase space function is called {\it second class} if it fails to satisfy the first class condition.}, which I shall denote by $\gamma_a$ and $\chi_{\alpha}$ respectively.  There are many possible realizations of the system of constraints in a given theory, all of them defining the same submanifold in phase space. For example, one can add the constraints, multiply them by functions or insert them as arguments of functions that vanish at zero\footnote{If the constraints are required to satisfy some regularity conditions (see 1.1.2 of \cite{HT}), than the set of possible operations that can be performed on the constraints is reduced to addition and multiplication by functions, since then the theorem holds that states that any function that vanishes on the constraint surface is necessarily a linear combination of the constraints (theorem 1.1 of \cite{HT}).}. Performing that kind of operations can change the total, as well as relative number of first and second class constraints. Indeed, adding a second class constraint to the first class one will result in a second class constraint. In fact, all the constraints could be made second class in this way, but this is not what we wish to achieve. I will say that the constraints are {\it well separated} into first and second class ones if the number of second class constraints is made minimal.

\subsection{Gauge freedom}

In classical physics, the time evolution of a system is expected to be deterministic. It appears that imposing such an assumption on a system with first class primary constraints leads to the conclusion that there is no one to one correspondence between points on the constraint surface and physical states. To see this, consider an arbitrary dynamical variable $F(t)\equiv F(q(t),p(t))$ whose dependence on $t$ is analytic and whose value at an instant of time $t_0$ is well established. The value of $F$ at time $t=t_0+\tau$ will be
\be
F(t_0+\tau)=F(t_0)+\tau \dot{F}(t_0)+\frac{\tau^2}{2}\ddot{F}(t_0)+\dots 
\ee
Using $\dot{F}=[F,H_T]$, $\ddot{F}=[[F,H_T],H_T]$ and (\ref{H_T2}) one obtains
\be\label{Ftrans}
\begin{aligned}
&F(t_0+\tau)=F+\tau\left({[F,H']+v^a[F,\phi_a]}\right) \\
&+\frac{\tau^2}{2}\left\{{[[F,H'],H']+v^a\left({2[[F,\phi_a],H']+[F,[H',\phi_a]]}\right)+v^av^{a'}[[F,\phi_a],\phi_{a'}]}\right\}+o(\tau^3)  ,
\end{aligned}
\ee
where on the RHS all the time dependent quantities (the functions $v^a$ and the brackets) should be evaluated at $t_0$. Now the functions $v^a$ can be prescribed arbitrarily. By adopting an alternative set of functions, say $\tilde{v}^a$, one gets different value $\tilde{F}(t_0+\tau)$ of the dynamical variable $F$ at $t_0+\tau$. Clearly, this difference cannot be physically meaningful if time evolution is to be deterministic. Rather, the difference should be interpreted as gauge freedom on which no measurable physical quantity should depend. Up to linear terms in $\tau$, this difference is simply
\be\label{ginf}
\delta F(t_0+\tau)=\tilde{F}(t_0+\tau)-F(t_0+\tau)=\tau\delta v^a(t_0) [F,\phi_a](t_0) , \quad \delta v^a:=\tilde{v}^a-v^a .
\ee
The infinitesimal form of the transformation, (\ref{ginf}), justifies the statement that the gauge transformation corresponding to the change in a particular function $v^a$ (for fixed $a$) is generated by the corresponding constraint $\phi_a$. The primary constraints $\phi_a$, as defined in (\ref{H_T2}), are first class, since for any constrain $\phi_j$ one has $[\phi_j,\phi_a]=[\phi_j,V_a^m]\phi_m+V_a^m[\phi_j,\phi_m]\approx 0$, where the vanishing of the last term follows straightforwardly form the definition of $V_a$'s as solutions to the homogeneous part of the system (\ref{ueq}). The conclusion follows that {\it first class primary constraints generate gauge transformations}. Consequently, if a dynamical variable $F$ is to represent an observable, it should commute with $\phi_a$'s. This guarantees that $F$ does not change under gauge transformations up to terms linear in $\tau$ (see (\ref{ginf})). However, in order to guarantee the invariance of $F$ up to second order terms in $\tau$, it is necessary to assume additionally that $F$ commutes with $[\phi_a,H']$ (look at the terms proportional to $\tau^2$ in (\ref{Ftrans})). Although $[\phi_a,H']$ is certainly a first class constraint \footnote{It is first class because the bracket of first class functions is of first class. This follows straightforwardly from the Jacobi identity. The bracket $[\phi_a,H']$ is also a constraint, since $H'$ is first class and hence its bracket with any constraint vanishes weakly.}, it does not have to be spanned by primary constraints. It follows that $F$ may in general need to commute with some secondary first class constraints, in order to be gauge invariant. If a dynamical variable commutes with all the first class constraints, both primary and secondary, than it is called a {\it classical observable}. Such observables are gauge invariant up to any order in the expansion (\ref{Ftrans}) and thus can describe physically measurable quantities.

Evolving the basic dynamical variables $q$ and $p$ from an instant $t_0$ to some instant $t$ with different choices of the arbitrary functions $v^a$ will yield a collection of points of phase space $(q(t),p(t))$, all of them describing the same physical state of the system at the instant $t$. The points in this collection are related by gauge transformations generated by the first class primary constraints. However, there may exist other points that also describe the same state, if there are secondary first class constraints present. In the cannonical approach to the quantization of constrained systems it is common to assume that all the first class constraints, both primary and secondary, generate gauge transformations, although this assumption may fail to be true for some special systems (see the counterexample to the Dirac conjecture in \cite{HT}). To make all the gauge freedom manifest in the equations of evolution, the so called {\it extended Hamiltonian} is adopted as a generator of the dynamics
\be
H_E=H'+w^b\gamma_b ,
\ee
where $b$ numbers all the first class constraints and $w^b$ are arbitrary functions (note that $H_E$ differs from $H_T$ by the presence of secondary first class constraints). It is the dynamics generated by $H_T$, and not $H_E$, that is equivalent to the Euler--Lagrange equations of (\ref{SL}). However, it is not difficult to see that the dynamics of any classical observable does not depend on the choice between $H_T$ and $H_E$. The general comparison of the evolution equations generated by $H_T$ and $H_E$ can be found in \cite{HT}. I will just illustrate the difference on the example of electrodynamics in the second part of this article. In fact, the reader does not have to bother by the discrepancy between $H_T$ and $H_E$ and the philosophy underlying the preferential use of $H_E$ over $H_T$ in the canonical analysis. In the gauge fixing approach to the quantization, which will be ultimately adopted in the second part of this paper, it does not matter which dynamics is utilized.

From the viewpoint of quantum theory, what we wish to extract from classical Hamiltonian analysis are the commutators between the physically important dynamical variables. The appropriate classical bracket acting on pairs of phase space functions should be identified that will then be replaced by (anti)commutators. In the absence of constraints and fermionic degrees of freedom, the Poisson bracket does the job. But it is not consistent neither with the constraints, nor with the presence of fermions (an antisymmetric structure cannot be consistently replaced by the symmetric anti--commutator). I shall now define the Dirac bracket, which is the modification of PB needed to handle the constraints in the absence of fermions. Later, the concept of the generalized Dirac bracket will be introduced that allows for inclusion of fermionic degrees of freedom.

\subsection{The Dirac bracket}

Assume that the set of all the constraints is well separated into the first class constraints $\gamma_b$ and the second class constraints $\chi_{\beta}$. The {\it Dirac bracket} (DB) of the dynamical variables $F$ and $G$ is defined by
\be\label{DB}
[F,G]_D=[F,G]_P-[F,\chi_{\beta}]_PC^{\beta\beta'}[\chi_{\beta'},G]_P ,
\ee
where $C^{\beta\beta'}$ represents the inverse matrix to $C_{\beta\beta'}:=[\chi_{\beta},\chi_{\beta'}]_P$, i.e. $C^{\beta\beta'}C_{\beta'\beta''}=\delta^{\beta}_{\beta''}$  (the matrix $C_{\beta\beta'}$ is necessarily invertible, since otherwise one can construct a first class constraint from $\chi_{\beta}$'s, which means that the constraints were not well separated. See Theorem 1.3 of \cite{HT}). It is easy to check that the Dirac bracket is anti--symmetric, obeys the Jacobi identity and Leibniz rule. Hence, the Dirac brackets can be replaced by commutators when passing to quantum theory in a consistent way. What is an advantage of  DB over PB is that DB of any dynamical variable with a second class constraint vanishes.
This allows for interpreting second class constraints as strong operator equations if a theory is quantized by the replacement of the Dirac brackets by the commutators of operators (the commutator of any operator with zero needs to vanish). 
Also, the DB of any variable with a first class variable is equal to the PB and therefore the equation of motion (\ref{dotFH_T}) can be rewritten as
\be\label{dotFDB}
\dot{F}\approx[F,H'+v^a\phi_a]_{DB} .
\ee

\vskip 0.2 in
\subsection{The anti-commuting Grassman variables and the generalized Dirac bracket}  

In the quantum theory of fields, the field operators that describe particles of integer value of spin can be characterized by appropriate commutation relations, whereas those of half integer spins obey anti--commutation relations. This distinction follows basically from the postulate of the Poincar\'e invariance, the assumption that the fields ought to be expressible as weighted integrals of annihilation and creation operators, and the requirement of causality.  See \cite{Wein} for more detailed justification of the connection between spin and statistics. 

As mentioned in the previous subsection, it is not possible to replace consistently the Dirac brackets of basic canonical variables by anti--commutators, since the two structures have incompatible symmetry. What is needed is the classical bracket that is antisymmetric if at least one of the variables is bosonic and symmetric if both the variables are fermionic. In order to obtain this structure it is necessary to introduce the two kinds of variables already at the classical level. I shall assume that the classical variables can be {\it even} (bosonic) or {\it odd} (fermionic), or they can be a sum of those. When these {\it Grassman variables} are introduced, their multiplication is no longer commutative. An even variable commutes with all the others, whereas odd variables anti--commute with one another. Let $q^i$, $\theta^{\alpha}$ constitute the set of positions of even and odd type, respectively. The multiplication rules are
\be
q^i q^j-q^j q^i=0,\quad \theta^{\alpha} q^i-q^i\theta^{\alpha}=0,\quad \theta^{\alpha}\theta^{\beta}+\theta^{\beta}\theta^{\alpha}=0 .
\ee
The only functions of the positions which I shall consider will be analytic in the odd variables
\be\label{SF}
f(q,\theta)=f_0(q)+f_{\alpha}(q)\theta^{\alpha}+f_{\alpha\beta}(q)\theta^{\alpha}\theta^{\beta}+\dots
\ee
Any such function is a sum of an even part $f_E$ and an odd part $f_O$:
\be
\begin{aligned}
&f(q,\theta)=f_E(q,\theta)+f_O(q,\theta) , \\
&f_E(q,\theta)=f_0(q)+f_{\alpha\beta}(q)\theta^{\alpha}\theta^{\beta}+\dots \\
&f_O(q,\theta)=f_{\alpha}(q)\theta^{\alpha}+f_{\alpha\beta\gamma}(q)\theta^{\alpha}\theta^{\beta}\theta^{\gamma}+\dots
\end{aligned}
\ee 

\subsubsection{Differentiation with respect to odd variables}

Under the infinitesimal variation of odd variables, the function (\ref{SF}) changes as
\be
\delta f=\delta\theta^{\alpha}\frac{\partial^Lf}{\partial\theta^{\alpha}}=\frac{\partial^Rf}{\partial\theta^{\alpha}}\delta\theta^{\alpha} .
\ee 
The above equation should be considered as definition of {\it right derivatives} and {\it left derivatives} with respect to odd variables. From (\ref{SF}) it is clear that the derivative of an odd function is even and vice versa, and hence the relations follow:
\be
\frac{\partial^L f_E}{\partial\theta^{\alpha}}=-\frac{\partial^Rf_E}{\partial\theta^{\alpha}},\quad
\frac{\partial^L f_O}{\partial\theta^{\alpha}}=\frac{\partial^Rf_O}{\partial\theta^{\alpha}} .
\ee
Let us consider an example of a function represented by a series that cuts off at the third order term in odd variables:
\be\label{SFexamp}
f(q,\theta)=f_0(q)+f_{\alpha}(q)\theta^{\alpha}+f_{\alpha\beta}(q)\theta^{\alpha}\theta^{\beta} .
\ee
The derivatives are then given by
\be
\frac{\partial^Lf}{\partial\theta^{\lambda}}=
f_{\lambda}-2f_{\alpha\lambda}\theta^{\alpha}+3f_{\alpha\beta\lambda}\theta^{\alpha}\theta^{\beta},\quad
\frac{\partial^Rf}{\partial\theta^{\lambda}}=
f_{\lambda}+2f_{\alpha\lambda}\theta^{\alpha}+3f_{\alpha\beta\lambda}\theta^{\alpha}\theta^{\beta} .
\ee
Variating this expressions with respect to $\theta$ yields
\be
\begin{aligned}
&\delta \frac{\partial^Lf}{\partial\theta^{\lambda}}=
\delta\theta^{\kappa}\left({-2f_{\kappa\lambda}+6f_{\kappa\beta\lambda}\theta^{\beta}}\right)=
\left({-2f_{\kappa\lambda}-6f_{\kappa\beta\lambda}\theta^{\beta}}\right)\delta^{\theta^\kappa} ,\\
&\delta \frac{\partial^Rf}{\partial\theta^{\lambda}}=
\delta\theta^{\kappa}\left({2f_{\kappa\lambda}-6f_{\kappa\beta\lambda}\theta^{\beta}}\right)=
\left({2f_{\kappa\lambda}+6f_{\kappa\beta\lambda}\theta^{\beta}}\right)\delta^{\theta\kappa} ,
\end{aligned}
\ee
from which the second derivatives with respect to odd variables can be read out. It follows that
\be\label{dirsym}
\begin{aligned}
\frac{\partial^L}{\partial\theta^{\kappa}}\left({\frac{\partial^Rf}{\partial\theta^{\lambda}}}\right)=
\frac{\partial^R}{\partial\theta^{\lambda}}\left({\frac{\partial^Lf}{\partial\theta^{\kappa}}}\right),\quad
\frac{\partial^L}{\partial\theta^{\kappa}}\left({\frac{\partial^Lf}{\partial\theta^{\lambda}}}\right)=
-\frac{\partial^L}{\partial\theta^{\lambda}}\left({\frac{\partial^Lf}{\partial\theta^{\kappa}}}\right),\quad
\frac{\partial^R}{\partial\theta^{\kappa}}\left({\frac{\partial^Rf}{\partial\theta^{\lambda}}}\right)=
-\frac{\partial^R}{\partial\theta^{\lambda}}\left({\frac{\partial^Rf}{\partial\theta^{\kappa}}}\right).\quad
\end{aligned}
\ee
The last two equalities imply that
\be\label{dirvanish}
\frac{\partial^L}{\partial\theta^{\underline{\kappa}}}\left({\frac{\partial^Lf}{\partial\theta^{\underline{\kappa}}}}\right)=
\frac{\partial^R}{\partial\theta^{\underline{\kappa}}}\left({\frac{\partial^Rf}{\partial\theta^{\underline{\kappa}}}}\right)=0 .
\ee
Here the bar below $\kappa$ means that Einstein summation convention is not applied, i.e. $\underline{\kappa}$ is a fixed value. The identities (\ref{dirsym}) and (\ref{dirvanish}) can be proved by induction to hold for any function of the form (\ref{SF}) and are necessary to prove Jacobi identity, to be discussed below. It is also easy to see that the derivatives with respect to even variables commute with those with respect to odd ones.

\subsubsection{Hamiltonian formalism in the presence of odd variables}

The action of the theory containing both even and odd positions is of the form
\be
S=\int L(q,\dot q,\theta,\dot \theta)dt ,
\ee
where the Lagrangian is assumed to be even (the time derivatives of even/odd variables are obviously of the same type as the original variables). The equations of motion that follow from the stationary condition for this action are
\be\label{ELodd}
\frac{\partial L}{\partial q^i}=\frac{d}{dt}\left({\frac{\partial L}{\partial \dot q^i}}\right) ,\quad
\frac{\partial^{L/R} L}{\partial\theta^{\alpha}}=\frac{d}{dt}\left({\frac{\partial^{L/R} L}{\partial \dot \theta^{\alpha}}}\right) ,\quad
\ee
where $L/R$ means that either type of the derivative can be used, if it is the same for both sides of the equation. The canonical momenta are defined by
\be
p_i:=\frac{\partial L}{\partial\dot q^i},\quad \pi^L_{\alpha}:=\frac{\partial^L L}{\partial\dot\theta^{\alpha}}, \quad
 \pi^R_{\alpha}:=\frac{\partial^R L}{\partial\dot\theta^{\alpha}} .
\ee
Since the Lagrangian is even, it follows that $\pi^L_{\alpha}=-\pi^R_{\alpha}$. Finally, the canonical Hamiltonian is given by
\be
H=\dot q^ip_i+\dot\theta^{\alpha}\pi^L_{\alpha}-L=p_i\dot q^i+\pi^R_{\alpha}\dot\theta^{\alpha}-L .
\ee
Note that one can use either left or right momenta, but the order of factors in the terms 
$\dot\theta^{\alpha}\pi^L_{\alpha}=-\pi^L_{\alpha}\dot\theta^{\alpha}$ and 
$\pi^R_{\alpha}\dot\theta^{\alpha}=-\dot\theta^{\alpha}\pi^R_{\alpha}$ has to be chosen correctly in order to avoid minus signs. 

It is also allowable to use left momenta for some odd variables and right momenta for the others. Such a mixed choice will appear to be particularly convenient in the case of the Dirac field, where I shall choose to right--differentiate with respect to the components of the field $\psi$ and left--differentiate with respect to the components of its Dirac conjugate $\ov{\psi}$. Before turning to the Dirac field in the next section, I shall now consider a theory with finite number of degrees of freedom, defined by the Lagrangian $L\left({q,\dot q,\theta,\ov{\theta},\dot{\theta},\dot{\ov{\theta}}}\right)$. 
I do not assume anything about the relation between the positions $\theta$ and $\ov{\theta}$, but I choose to right--differentiate with respect to $\theta$'s and left--differentiate with respect to $\ov{\theta}$'s. The momenta and the Hamiltonian are given by
\be\label{oddH}
\begin{aligned}
&p_i=\frac{\partial L}{\partial \dot q^i},\quad
\pi_{\alpha}:=\pi^R_{\alpha}=\frac{\partial^R L}{\partial\dot\theta^{\alpha}} ,\quad
\ov{\pi}_{\alpha}:=\ov{\pi}^L_{\alpha}=\frac{\partial^L L}{\partial\dot{\ov{\theta}}^{\alpha}},\\
&H=p_i\dot q^i+\pi_{\alpha}\dot\theta^{\alpha}+\dot{\ov{\theta}}^{\alpha}\ov{\pi}_{\alpha}-L .
\end{aligned}
\ee 
Variation of the Hamiltonian is given by
\be\label{dH1}
\delta H=\delta p_i\dot q^i+\delta\pi_{\alpha}\dot\theta^{\alpha}+\dot{\ov{\theta}}^{\alpha}\delta\ov{\pi}^{\alpha}-
\frac{\partial L}{\partial q^i}\delta q^i-
\frac{\partial^RL}{\partial\theta^{\alpha}}\delta\theta^{\alpha}-\delta\ov{\theta}^{\alpha}\frac{\partial^LL}{\partial\ov{\theta}^{\alpha}} ,
\ee
where the definitions of the momenta (\ref{oddH}) where used in the calculation. Also, the variation of $H$ interpreted as a function of positions and momenta is
\be\label{dH2}
\delta H=\frac{\partial H}{\partial q^i}\delta q^i+\frac{\partial H}{\partial p_i}\delta p_i+
\frac{\partial^RH}{\partial\theta^{\alpha}}\delta\theta^{\alpha}+\delta\ov{\theta}^{\alpha}\frac{\partial^LH}{\partial\ov{\theta}^{\alpha}}+
\delta\pi_{\alpha}\frac{\partial^LH}{\partial\pi_{\alpha}}+\frac{\partial^RH}{\partial\ov{\pi}_{\alpha}}\delta\ov{\pi}^{\alpha} .
\ee

\subsubsection{Unconstrained systems: the generalized Poisson bracket}

If the equations (\ref{oddH}) can be used to express velocities in terms of positions and momenta in a unique way and do not lead to any relations between positions and momenta, then the positions and momenta are independent and hence their general variations are linearly independent. It then follows from (\ref{dH1}) and (\ref{dH2}) that
\be\label{pomx}
\dot q^i=\frac{\partial H}{\partial p_i},\quad\dot\theta^{\alpha}=\frac{\partial H}{\partial\pi_{\alpha}},\quad \dot{\ov{\theta}}^{\alpha}=\frac{\partial H}{\partial\ov{\pi}_{\alpha}},\quad
\frac{\partial L}{\partial q^i}=-\frac{\partial H}{\partial q^i},\quad \frac{\partial L}{\partial \theta^{\alpha}}=-\frac{\partial H}{\partial\theta^{\alpha}},\quad \frac{\partial L}{\partial\ov{\theta}^{\alpha}}=-\frac{\partial H}{\partial \ov{\theta}^{\alpha}} .
\ee
Here and below the derivatives with respect to $\theta$'s and $\ov{\pi}$'s are defined as right derivatives, whereas those with respect to $\ov{\theta}$'s and $\pi$'s are left derivatives. 
In order to rewrite these equations in a form that does not include the velocities, the relations following from the Euler--Lagrange equations (\ref{ELodd})
\be
\frac{\partial L}{\partial q^i}=\dot{p}_i,\quad 
\frac{\partial L}{\partial \theta^{\alpha}}=\dot{\pi}_{\alpha},\quad 
\frac{\partial L}{\partial \ov{\theta}^{\alpha}}=\dot{\ov{\pi}}_{\alpha} 
\ee
need to be used. When inserted into (\ref{pomx}), they lead to
\be
\dot q^i=\frac{\partial H}{\partial p_i},\quad\dot\theta^{\alpha}=\frac{\partial H}{\partial\pi_{\alpha}},\quad \dot{\ov{\theta}}^{\alpha}=\frac{\partial H}{\partial\ov{\pi}_{\alpha}},\quad
\dot p_i=-\frac{\partial H}{\partial q^i},\quad \dot{\pi}_{\alpha}=-\frac{\partial H}{\partial\theta^{\alpha}},\quad \dot{\ov{\pi}}_{\alpha}=-\frac{\partial H}{\partial \ov{\theta}^{\alpha}} .
\ee
These Hamilton equations are equivalent to the Euler--Lagrange equations (\ref{ELodd}). The time derivative of any dynamical variable $F(q,p,\theta,\ov\theta,\pi,\ov\pi)$ is given by
\be\label{Fdot}
\begin{aligned}
&\dot F=\frac{\partial F}{\partial q^i}\dot q^i+\frac{\partial F}{\partial p_i}\dot p_i+\frac{\partial F}{\partial \theta^{\alpha}}\dot\theta^{\alpha}+\dot{\ov{\theta}}^{\alpha}\frac{\partial F}{\partial \ov{\theta}^{\alpha}}+
\dot\pi_{\alpha}\frac{\partial F}{\partial \pi_{\alpha}}+\frac{\partial F}{\partial\ov{\pi}_{\alpha}}\dot{\ov{\pi}}_{\alpha}\\
&=\frac{\partial F}{\partial q^i}\frac{\partial H}{\partial p_i}-
\frac{\partial F}{\partial p_i}\frac{\partial H}{\partial q^i}+\frac{\partial F}{\partial \theta^{\alpha}}\frac{\partial H}{\partial \pi_{\alpha}}+\frac{\partial H}{\partial \ov{\pi}_{\alpha}}\frac{\partial F}{\partial \ov{\theta}^{\alpha}}-
\frac{\partial H}{\partial \theta^{\alpha}}\frac{\partial F}{\partial \pi_{\alpha}}-\frac{\partial F}{\partial\ov{\pi}_{\alpha}}\frac{\partial H}{\partial\ov{\theta}^{\alpha}} .
\end{aligned}
\ee

In quantum theory, the dynamical variables will be replaced by operators. In order to accomplish the canonical quantization, we need to find the classical bracket $[,]_{GP}$, which I shall call the {\it generalized Poisson bracket} (GPB), for which the relation 
\be\label{corr}
[\widehat F,\widehat G]_{\mp}=i\widehat{[F,G]}_{GP}
\ee
will hold (the units in which $c=\hbar=1$ will be used throughout), where $F$ and $G$ are the classical dynamical variables and $\widehat F$, $\widehat G$ the corresponding operators. The operators that arise from even classical variables via the map  $ \ \widehat{} \ $ will be called {\it bosonic}, whereas those arising from odd variables {\it fermionic}. The bracket $[F,G]_{\mp}$ should be interpreted as commutator $[,]_-$ if at least one of the operators is bosonic. If both the operators are fermionic then $[,]_{\mp}$ denotes anti--commutator $[,]_+$. If the opearators do not have definite parity then they can be expressed as sums of those with well established parity and then the linearity of $[,]_{\mp}$ should be used\footnote{The map $ \ \widehat{} \ $ is assumed to be linear and all the brackets are bilinear with respect to $\mathbb{C}$ numbers.}. I wish that the time derivative (\ref{Fdot}) of any dynamical variable expresses through the classical bracket as $\dot F=[F,H]_{GP}$, which implies on account of (\ref{Fdot}) that
\be\label{FE}
[F,E]_{GP}
=\frac{\partial F}{\partial q^i}\frac{\partial E}{\partial p_i}-
\frac{\partial F}{\partial p_i}\frac{\partial E}{\partial q^i}+\frac{\partial F}{\partial \theta^{\alpha}}\frac{\partial E}{\partial \pi_{\alpha}}+\frac{\partial E}{\partial \ov{\pi}_{\alpha}}\frac{\partial F}{\partial \ov{\theta}^{\alpha}}-
\frac{\partial E}{\partial \theta^{\alpha}}\frac{\partial F}{\partial \pi_{\alpha}}-\frac{\partial F}{\partial\ov{\pi}_{\alpha}}\frac{\partial E}{\partial\ov{\theta}^{\alpha}} ,
\ee
where $E$ is any even dynamical variable (not necessarily the Hamiltonian) and the variable $F$ is completely arbitrary.
Now (\ref{corr}) implies that $[,]_{GP}$ needs to have the same algebraic properties as $[,]_{\mp}$. This means in particular that it is anti--symmetric whenever at least one of the variables is even and hence
\be\label{EF}
[E,F]_{GP}=-[F,E]_{GP} .
\ee
The formulas (\ref{FE}) and (\ref{EF}) specify GPB in the case when one of the variables is even. It only remained to find GPB in the case when both the variables are odd. If $A$, $B$ and $C$ are odd variables, then (\ref{corr}) implies that GPB should satisfy two algebraic conditions
\be\label{alsym}
[A,B]_{GP}=[B,A]_{GP},\quad
[A,BC]_{GP}=[A,B]_{GP}C-B[A,C]_{GP} .
\ee
The latter one follows from the operator identity $[\widehat A,\widehat B\widehat C]_-=[\widehat A,\widehat B]_+\widehat C-\widehat B[\widehat A,\widehat C]_+$ and from the fact that the product of two odd operators is even. I make an assumption that the classical bracket is composed from the products of partial derivatives with respect to basic canonical variables. More precisely, I assume that
\be\label{GPini}
[A,B]_{GP}=c\frac{\partial A}{\partial q^i}\frac{\partial B}{\partial p_i}+d\frac{\partial A}{\partial p_i}\frac{\partial B}{\partial q^i}+a\frac{\partial A}{\partial\theta^{\alpha}}\frac{\partial B}{\partial\pi_{\alpha}}+f\frac{\partial B}{\partial\theta^{\alpha}}\frac{\partial A}{\partial\pi_{\alpha}}+
b\frac{\partial B}{\partial\ov{\pi}_{\alpha}}\frac{\partial A}{\partial\ov{\theta}^{\alpha}}+g\frac{\partial A}{\partial\ov{\pi}_{\alpha}}\frac{\partial B}{\partial\ov{\theta}^{\alpha}}
\ee 
for some complex numbers $c$, $d$, $a$, $f$, $b$, $g$, to be determined from (\ref{alsym}). In the calculations that will follow it is important to remember that differentiation with respect to even variables does not change the type of parity, whereas differentiation with respect to odd variables reverses the parity, so e.g. $\partial A/\partial q^i$ is odd and $\partial A/\partial\theta^{\alpha}$ is even. Remembering this, it is straightforward to verify that
\be
\begin{aligned}
&[A,B]_{GP}-[B,A]_{GP}=(c+d)\left({\frac{\partial A}{\partial q^i}\frac{\partial B}{\partial p_i}+\frac{\partial A}{\partial p_i}\frac{\partial B}{\partial q^i}}\right)\\
&+(a-f)\left({\frac{\partial A}{\partial\theta^{\alpha}}\frac{\partial B}{\partial \pi_{\alpha}}-
\frac{\partial B}{\partial\theta^{\alpha}}\frac{\partial A}{\partial \pi_{\alpha}}}\right)+
(b-g)\left({\frac{\partial A}{\partial\ov{\pi}_{\alpha}}\frac{\partial B}{\partial \ov{\theta}^{\alpha}}-
\frac{\partial B}{\partial\ov{\pi}_{\alpha}}\frac{\partial A}{\partial\ov{\theta}^{\alpha}}}\right) .
\end{aligned}
\ee
If this expression is to vanish for {\it any} odd variables then it has to be $d=-c$, $f=a$, $g=b$. It follows that
\be
[A,B]_{GP}=c\left({\frac{\partial A}{\partial a^i}\frac{\partial B}{\partial p_i}-\frac{\partial A}{\partial p_i}\frac{\partial B}{\partial q^i}}\right)+
a\left({\frac{\partial A}{\partial\theta^{\alpha}}\frac{\partial B}{\partial\pi_{\alpha}}+\frac{\partial B}{\partial\theta^{\alpha}}\frac{\partial A}{\partial\pi_{\alpha}}}\right)+
b\left({\frac{\partial B}{\partial\ov{\pi}_{\alpha}}+\frac{\partial A}{\partial\ov{\pi}_{\alpha}}\frac{\partial B}{\partial\ov{\theta}^{\alpha}}}\right) .
\ee 
The remaining freedom of the parameters $c$, $a$, $b$ is eliminated when the second algebraic condition in (\ref{alsym}) is imposed. To calculate $[A,BC]_{GP}$ one should use (\ref{FE}) for $F=A$ and $E=BC$ (note that $BC$ is even as a product of odd variables and hance (\ref{FE}) applies). The derivatives of $BC$ with respect to even variables can be decomposed according to the standard Leibniz formula, but some care is necessary when differentiating with respect to odd variables. Specifically, the following relations hold
\be
\begin{aligned}
&\frac{\partial (BC)}{\partial\theta^{\alpha}}=-\frac{\partial B}{\partial\theta^{\alpha}}C+B\frac{\partial C}{\partial\theta^{\alpha}},\quad
\frac{\partial (BC)}{\partial\ov\theta^{\alpha}}=\frac{\partial B}{\partial\ov\theta^{\alpha}}C-B\frac{\partial C}{\partial\ov\theta^{\alpha}},\quad \\
&\frac{\partial (BC)}{\partial\ov\pi_{\alpha}}=-\frac{\partial B}{\partial\ov\pi_{\alpha}}C+B\frac{\partial C}{\partial\ov\pi_{\alpha}},\quad
\frac{\partial (BC)}{\partial\pi_{\alpha}}=\frac{\partial B}{\partial\pi_{\alpha}}C-B\frac{\partial C}{\partial\pi_{\alpha}} .
\end{aligned}
\ee
To prove these relations it is sufficient to calculate the variation of $BC$ and remember that all the derivatives with respect to $\theta^{\alpha}$ and $\ov\pi_{\alpha}$ are understood to be right derivatives and those with respect to $\ov\theta^{\alpha}$ and $\pi_{\alpha}$ are left. For example, a simple calculation of variation with respect to $\theta^{\alpha}$
\be
\delta (BC)=\delta BC+B\delta C=\frac{\partial^R B}{\partial\theta^{\alpha}}\delta\theta^{\alpha}C+
B\frac{\partial^R C}{\partial\theta^{\alpha}}\delta\theta^{\alpha}=
\left({-\frac{\partial B}{\partial\theta^{\alpha}}C+B\frac{\partial C}{\partial\theta^{\alpha}}}\right)\delta\theta^{\alpha}
\ee
proves the first of the identities. Having these results at hand it is straightforward to establish that the second identity of (\ref{alsym}) will be fulfilled if and only if $a=1$, $b=-1$ and $c=1$. The final form of the GPB for two odd variables can then be given:
\be\label{AB}
[A,B]_{GP}=\frac{\partial A}{\partial q^i}\frac{\partial B}{\partial p_i}-\frac{\partial A}{\partial p_i}\frac{\partial B}{\partial q^i}+
\frac{\partial A}{\partial\theta^{\alpha}}\frac{\partial B}{\partial\pi_{\alpha}}+\frac{\partial B}{\partial\theta^{\alpha}}\frac{\partial A}{\partial\pi_{\alpha}}-
\frac{\partial B}{\partial\ov{\pi}_{\alpha}}\frac{\partial A}{\partial\ov{\theta}^{\alpha}}-\frac{\partial A}{\partial\ov{\pi}_{\alpha}}\frac{\partial B}{\partial\ov{\theta}^{\alpha}} 
\ee 
which, together with (\ref{FE}) and (\ref{EF}) and the assumption of bilinearity of GPB, defines GPB for all the variables. This GPB is the same as the bracket given by the formula (4.1) of \cite{JHP}, although it is not so easily visible, since the authors of \cite{JHP} use left derivatives, whereas I use left derivatives when differentiating w.r.t. $\ov{\theta}$'s and $\pi$'s and right ones w.r.t. $\theta$'s and $\ov{\pi}$'s. 

The conditions (\ref{alsym}) where used to fix the parameters in the initial form of GPB (\ref{GPini}), 
so these conditions are certainly satisfied by the bracket from the construction. However, it is necessary to verify that the remaining algebraic conditions are satisfied by GPB. All these conditions are
\be\label{alsymall}
\begin{aligned}
&[F_1,E_1]_{GP}=-[E_1,F_1]_{GP},\\
&[A_1,A_2]_{GP}=[A_2,A_1]_{GP},\\
&[E_1,F_1F_2]_{GP}=[E_1,F_1]_{GP}F_2+F_1[E_1,F_2]_{GP},\\
&[A_1,E_1F_2]_{GP}=[A_1,E_1]_{GP}F_2+E_1[A_1,F_2]_{GP},\\
&[A_1,A_2F_1]_{GP}=[A_1,A_2]_{GP}F_1-A_2[A_1,F_1]_{GP},\\
&\left[{F_1,\left[{E_2,E_3}\right]_{GP}}\right]_{GP}+\left[{E_3,\left[{F_1,E_2}\right]_{GP}}\right]_{GP}+\left[{E_2,\left[{E_3,F_1}\right]_{GP}}\right]_{GP}=0,\\
&\left[{E_1,\left[{A_2,A_3}\right]_{GP}}\right]_{GP}-\left[{A_3,\left[{E_1,A_2}\right]_{GP}}\right]_{GP}+\left[{A_2,\left[{A_3,E_1}\right]_{GP}}\right]_{GP}=0,\\
&\left[{A_1,\left[{A_2,A_3}\right]_{GP}}\right]_{GP}+\left[{A_3,\left[{A_1,A_2}\right]_{GP}}\right]_{GP}+\left[{A_2,\left[{A_3,A_1}\right]_{GP}}\right]_{GP}=0 .
\end{aligned}
\ee
where $E_j$'s are even, $A_j$'s are odd and $F_j$'s are arbitrary. These identities can be derived straightforwardly from the corresponding operator identities under the assumption that the bracket corresponds to the anti--commutator if both variables are odd and the commutator if at least one of them is even. 
These conditions can be written more succinctly if the parity index $\#F$ is introduced, which is $0$ for even $F$ and $1$ for odd $F$. Then the conditions (\ref{alsymall}) can be rewritten in an equivalent form given by eqs. (4.3), (4.4), (4.5) and (4.6) of \cite{JHP}. However, when considering the examples, it is convenient to have them written down explicitly.

One can verify by straightforward but lengthy calculations (preferably performed with the help of {\it Mathematica} or {\it Maple}) that GPB defined by (\ref{FE}), (\ref{EF}) and (\ref{AB}) does indeed satisfy (\ref{alsymall}). To prove the Jacobi identities (the last three of (\ref{alsymall})), it is necessary to use (\ref{dirsym}) and (\ref{dirvanish}).

\subsubsection{Constrained systems: the generalized Dirac bracket}

In order to handle second class constraints consistently, it is necessary to introduce classical bracket which weakly vanishes whenever one of its arguments is a second class constraint. Can the formula (\ref{DB}) for the Dirac bracket be adopted in the presence of odd variables? Certainly, at lest one modification is necessary, namely all Poisson brackets have to be replaced by generalized ones. I shall define the {\it generalized Dirac bracket} (GDB) as
\be\label{GDB}
[F,G]_{GD}=[F,G]_{GP}-[F,\chi_{\beta}]_{GP}C^{\beta\beta'}[\chi_{\beta'},G]_{GP}.
\ee
This is the bracket that will be replaced by commutators and anti--commutators. Hence, in order for the quantization procedure to be consistent, (\ref{GDB}) ought to satisfy all the algebraic conditions (\ref{alsymall}) (just replace GP by GDB in (\ref{alsymall})). I shall assume for simplicity that all the second class constraints $\chi_{\beta}$ have the same Grassman parity (this assumption will be weakened somewhat below). Then $C^{\beta\beta'}$ is anti--symmetric in the even case and symmetric in the odd case. The matrix elements are even in both cases. Under this assumption it is straightforward to show that the identities (\ref{alsymall}) are satisfied indeed by (\ref{GDB}). Note however that the order of factors, as well as ordering of arguments of GP is important. For example, one could try to define GDB as
\be\label{GDB1}
[F,G]_{GD_1}=[F,G]_{GP}+[\chi_{\beta},F]_{GP}C^{\beta\beta'}[\chi_{\beta'},G]_{GP}
\ee
or
\be\label{GDB2}
[F,G]_{GD_2}=[F,G]_{GP}-[\chi_{\beta'},G]_{GP}C^{\beta\beta'}[F,\chi_{\beta}]_{GP},
\ee
both of these expressions being equivalent to (\ref{GDB}) so long as the constraints and all the variables are even. However, $GD_1$ will fail to be anti--symmetric in the case of $\chi_{\beta}$, $G$ odd and $F$ even. Indeed, one gets
\be
[E,A]_{GD_1}+[A,E]_{GD_1}=2[\chi_{\beta},E]_{GP}C^{\beta\beta'}[\chi_{\beta'},A]_{GP},
\ee
where $E$ is even and $A$ and $\chi_{\beta}$ odd.
 $GD_2$ will fail to satisfy the Leibniz rule for $\chi_{\beta}$ odd:
\be
[E,FA]_{GD_2}=[E,F]_{GD_2}A+F[E,A]_{GD_2}+2[\chi_{\beta'},F]_{GD_2}C^{\beta\beta'}[E,\chi_{\beta}]A
\ee
for $E$, $F$ even and $A$, $\chi_{\beta}$ odd. The reader is encouraged to verify that other alternatives for (\ref{GDB}) are not consistent with (\ref{alsymall}).

I assumed that all the second class constraints have the same Grassman parity. However, if these constraints can be separated into groups such that the constraints from different groups weakly commute with one another and all the constraints within a group have the same parity, then the results discussed above will also be true. This more general situation will occur in electrodynamics.

\section{Canonical quantization of the Dirac field}\label{dirac}

\subsection{Classical Hamiltonian analysis}

The action of the theory is
\be\label{Sdir}
S_D=\int \mathcal{L}_D\,d^4x ,\qquad
\mathcal{L}_D=\frac{i}{2}\left[{\overline{\psi}(x)\gamma^a\partial_a\psi(x)-\partial_a\overline{\psi}(x)\gamma^a\psi(x)}\right]-m\overline{\psi}(x)\psi(x) .
\ee
Here $(x)=(t,\vec{x})$ represents Minkowskian coordinates of flat space--time, $\psi$ can be thought of as a column of four complex valued functions  $\psi_l$, $l=1,\dots,4$ on space--time, $\gamma^a$ are the Dirac matrices, $\overline{\psi}:=\psi^{\dag}\gamma^0$ is the Dirac conjugation (here $\dag$ denotes the Hermitian conjugation of a column matrix), $a=0,\dots,3$ is a space--time index. For fixed $t$, $\psi_l(\vec{x})$ should be thought of as odd variables, so one needs to keep track of minus signs whenever the ordering of these fields is changed. However, consequent application of matrix notation in the calculations makes the ordering to be automatically correct in most cases and hence one can forget about the odd nature of $\psi$'s at the beginning of the analysis.

The Dirac Lagrangian density can be written in many equivalent ways, owing to the possibility of adding divergence of a vector field. The choice 
$\mathcal{L}_D$ differs by divergence of $V^a=\frac{i}{2}\overline{\psi}\gamma^a\psi$ from the simplest Lagrangian density
\be\label{Lsimp}
\mathcal{L}_{simp}=\overline{\psi}\left({i\gamma^a\partial_a-m}\right)\psi .
\ee
Although the two versions (as well as many others) are equivalent in flat space, they appear not to be equivalent when gravity, interpreted a Yang--Mills gauge theory of the Poincar\'e group, is minimally included \cite{Kazm1}. A modified coupling procedure can be introduced that is free of this ambiguity \cite{Kazm3}. This corrected coupling procedure is equivalent to the standard minimal one if $\mathcal{L}_D$, and not any other version of the Dirac Lagrangian, is used as a starting point for inclusion of gravity. Although this choice does not matter in the case of electrodynamics, I will chose $\mathcal{L}_D$, since then the extension of our considerations to the gravitational case will be possibly straightforward.

Calculation of the momenta in accordance with (\ref{momenta}) gives
\be
\pi:=\frac{\partial\mathcal{L}}{\partial(\partial_0\psi)}=\frac{i}{2}\overline{\psi}\gamma^0,\quad
\overline{\pi}:=\frac{\partial\mathcal{L}}{\partial(\partial_0\overline{\psi})}=-\frac{i}{2}\gamma^0\psi .
\ee
These relations does not involve time derivatives of fields and hence represent the primary constraints. In the matrix notation, the constraints can be written as
\be\label{chi}
\chi_1:=\pi-\frac{i}{2}\overline{\psi}\gamma^0,\quad \chi_2=\overline{\pi}+\frac{i}{2}\gamma^0\psi .
\ee
Hence, $\chi_1$ is a row matrix, whereas $\chi_2$ is a column. Strictly speaking, (\ref{chi}) represents infinite number of primary constraints, which can be labeled by $l$ and the space point $\vec{x}$
\be\label{chicomp}
\chi_{1l\vec{x}}=\pi_l(\vec{x})-\frac{i}{2}\overline{\psi}_{l'}(\vec{x})\gamma^0_{l'l},\quad
\chi_{2l\vec{x}}=\overline{\pi}_l(\vec{x})+\frac{i}{2}(\vec{x})\gamma^0_{ll'}\psi_{l'}.\quad
\ee

The basic canonical variables $\psi$, $\overline{\psi}$, $\pi$, $\overline{\pi}$ are all odd. The formulas (\ref{EF}), (\ref{FE}) and (\ref{AB}) for GPB reduce to
\be\label{GPdir}
[F,G]_{GP}=\int \left({
\frac{\delta F}{\delta\psi(\vec{x})} \frac{\delta G}{\delta\pi(\vec{x})} \pm 
\frac{\delta G}{\delta\overline{\pi}(\vec{x})} \frac{\delta F}{\delta\overline{\psi}(\vec{x})} \mp
\frac{\delta G}{\delta\psi(\vec{x})} \frac{\delta F}{\delta\pi(\vec{x})} -
\frac{\delta F}{\delta\overline{\pi}(\vec{x})} \frac{\delta G}{\delta\overline{\psi}(\vec{x})}
}\right) d^3x ,
\ee
where the upper sign applies whenever at least one of the variables $F$, $G$ is even and the lower one corresponds to $F$ and $G$ odd. The matrix index $l$ was omitted in favor of matrix multiplication between factors. If $F$ is a scalar functional of the canonical fields, then the functional derivatives ${\delta F}/{\delta \psi(\vec{x})}$ and ${\delta F}/{\delta \overline{\pi}(\vec{x})}$ are column matrices, whereas 
 ${\delta F}/{\delta \overline{\psi}(\vec{x})}$ and ${\delta F}/{\delta \pi(\vec{x})}$ are rows. To see this, consider an example of a functional
\be
F=\int\overline{\psi}(\vec{x})M(\vec{x})\psi(\vec{x}) d^3x ,
\ee
where $M$ is any matrix--valued function on space that does not depend on $\psi$ and $\overline{\psi}$. Under the infinitesimal change of $\psi$, the variation of $F$ is $\delta F=\int\overline{\psi}(\vec{x})M(\vec{x})\delta\psi(\vec{x}) d^3x$, whereas the variation of $F$ under the change of $\overline{\psi}$ is $\delta F=\int\delta\overline{\psi}(\vec{x})M(\vec{x})\psi(\vec{x}) d^3x$. It follows that 
\be
\frac{\delta F}{\delta\psi(\vec{x})}=\overline{\psi}(\vec{x})M(\vec{x}),\quad
\frac{\delta F}{\delta\overline{\psi}(\vec{x})}=M(\vec{x})\psi(\vec{x})
\ee
and hence  ${\delta F}/{\delta \psi(\vec{x})}$ is a row and  ${\delta F}/{\delta \overline{\psi}(\vec{x})}$ a column. Note that due to the matrix formalism the functional derivatives with respect to $\psi$ and $\overline{\pi}$ are automatically right derivatives and those with respect to $\overline{\psi}$ and $\pi$ are left (this is why I chose such convention when discussing general systems with odd variables).

The canonical Hamiltonian calculated in accordance with (\ref{momenta}) is 
\be
\begin{aligned}
&H=\int \mathcal{H}(\vec{x}) d^3x, \\
&\mathcal{H}=\pi\dot\psi+\dot{\overline{\psi}}-\mathcal{L}_D=\left({\pi-\frac{i}{2}\overline{\psi}\gamma^0}\right)\dot{\psi}+\dot{\overline{\psi}}\left({\overline{\pi}+\frac{i}{2}\gamma^0\psi}\right)
-i\overline{\psi}\gamma^j\partial_j\psi+m\overline{\psi}\psi+\frac{i}{2}\partial_j\left({\overline{\psi}\gamma^j\psi}\right) ,
\end{aligned}
\ee
where $j=1,2,3$ is a spatial index. Discarding the last boundary term, which would not contribute to the brackets of $H$ with other variables, and using the constraints, we get
\be
H=\int\overline{\psi}(\vec{x})\left({-i\gamma^j\partial_j+m}\right)\psi(\vec{x}) d^3x 
\ee
and 
\be\label{Hdir}
H_T=H+\int\left({\chi_1(\vec{x})u^1(\vec{x})+u^2(\vec{x})\chi_2(\vec{x})}\right)d^3x ,
\ee
where $u^1$ and $u^2$ are correspondingly a column and a row of complex--valued functions, as yet undetermined.

The consistency conditions (\ref{constrpres}) for time evolution of constraints can be solved in two ways. One can utilize matrix formalism and use (\ref{chi}),  (\ref{GPdir}) and (\ref{Hdir}) to obtain
\be\label{chiH_T}
\begin{aligned}
&[\chi_1(\vec{x}),H_T]_{GP}=-\frac{\delta H_T}{\delta\psi(\vec{x})}-\frac{i}{2}\frac{\delta H_T}{\delta\overline{\pi}(\vec{x})}\gamma^0=
-iu^2(\vec{x})\gamma^0-i\partial_j\overline{\psi}(\vec{x})\gamma^j-m\overline{\psi}\approx 0 , \\
&[\chi_2(\vec{x}),H_T]_{GP}=-\frac{\delta H_T}{\delta\overline{\psi}(\vec{x})}+\gamma^0\frac{i}{2}\frac{\delta H_T}{\delta\pi(\vec{x})}=
i\gamma^0 u^1(\vec{x})+i\gamma^j\partial_j\psi(\vec{x})-m\psi\approx 0 . 
\end{aligned}
\ee
Clearly, these equations can be solved by the appropriate choise of $u$'s. Therefore, no secondary constraints appear. The solution for $u$'s is given by
\be
U^1=-\gamma^0\gamma^j\partial_j\psi-im\gamma^0\psi,\quad
U^2=-\partial_j\overline{\psi}\gamma^j\gamma^0+im\overline{\psi}\gamma^0 .
\ee
These expressions provide a particular solution of an inhomogeneous system of equations discussed in (\ref{ueq}) and therefore uppercase letters are used (compare (\ref{usol})). The general solution to the homogeneous part of (\ref{ueq}) is equal to $0$ in this case.

Alternatively, instead of using matrices, one could rewrite the total Hamiltonian as
\be\label{H_Tcomp}
H_T=\int\left({-i\overline{\psi}_l(\vec{x})\gamma^j_{ll'}\partial_j\psi_{l'}(\vec{x})+m\overline{\psi}_l(\vec{x})+\chi_{1l\vec{x}}u^1_l(\vec{x})+u^2_l(\vec{x})\chi_{2l\vec{x}}}\right)d^3x
\ee
and use the component form of constraints (\ref{chicomp}). In the integrand of (\ref{H_Tcomp}) all the variables, together with $u^1_l$ and $u^2_l$, are odd and their chronology is therefore important. When calculating the brackets of $\chi_{1l\vec{x}}$ and $\chi_{2l\vec{x}}$ with the integrand of (\ref{H_Tcomp}), it is useful to use the commutation relations between the basic canonical variables
\be
\begin{aligned}
&[\psi_l(\vec{x}),\psi_{l'}(\vec{x}')]_{GP}=[\overline{\psi}_l(\vec{x}),\overline{\psi}_{l'}(\vec{x}')]_{GP}=
[\pi_l(\vec{x}),\pi_{l'}(\vec{x}')]_{GP}=[\overline{\pi}_l(\vec{x}),\overline{\pi}_{l'}(\vec{x}')]_{GP}=0, \\
&[\psi_l(\vec{x}),\pi_{l'}(\vec{x}')]_{GP}=[\pi_{l'}(\vec{x}'),\psi_{l}(\vec{x})]_{GP}=\delta_{ll'}\delta (\vec{x}-\vec{x}') ,\\
&[\overline{\psi}_l(\vec{x}),\overline{\pi}_{l'}(\vec{x}')]_{GP}=[\overline{\pi}_{l'}(\vec{x}'),\overline{\psi}_{l}(\vec{x})]_{GP}=-\delta_{ll'}\delta (\vec{x}-\vec{x}')
\end{aligned}
\ee
(note the minus sign in front of $\delta_{ll'}$ in the last expression) and the Leibniz rule for odd variables, $[A,BC]_{GP}=[A,B]_{GP}C-B[A,C]_{GP}$. This results can be derived from (\ref{GPdir}). Using them, one can calculate that
\be\label{C12}
\begin{aligned}
&C_{1l\vec{x},1l'\vec{x}'}:=[\chi_{1l\vec{x}},\chi_{1l'\vec{x}'}]_{GP}=0,\\
&C_{2l\vec{x},2l'\vec{x}'}:=[\chi_{2l\vec{x}},\chi_{2l'\vec{x}'}]_{GP}=0 ,\\
&C_{1l\vec{x},2l'\vec{x}'}:=[\chi_{1l\vec{x}},\chi_{2l'\vec{x}'}]_{GP}=i\gamma^0_{l'l}\delta (\vec{x}-\vec{x}')=[\chi_{2l'\vec{x}'},\chi_{1l\vec{x}}]_{GP}=:C_{2l'\vec{x}',1l\vec{x}} , \\
&[\chi_{1k\vec{y}},H_T(\vec{x})]_{GP}=\int\left({i\overline{\psi}_l(\vec{x})\gamma^j_{lk}\partial_j\delta(\vec{x}-\vec{y})-
\left({m\overline{\psi}_k(\vec{x})+i\gamma^0_{lk}u^2_l(\vec{x})}\right)\delta(\vec{x}-\vec{y})}\right)d^3x ,\\
&[\chi_{2k\vec{y}},H_T(\vec{x})]_{GP}=\int\left({i\gamma^j_{kl'}\partial_j\psi_{l'}(\vec{x})-m\psi_k(\vec{x})+i\gamma^0_{kl}u^1_l(\vec{x})}\right)\delta(\vec{x}-\vec{y})d^3x .
\end{aligned}
\ee
After the integrations are performed, the last two expressions reduce to (\ref{chiH_T}). When writing $\partial_j\delta (\vec{x}-\vec{y})$ I always mean the differentiation with respect to first variable, i.e. 
$\partial_j\delta(\vec{x}-\vec{y})\equiv(\partial_j\delta)(\vec{x}-\vec{y})=\partial/\partial x^j\delta(\vec{x}-\vec{y})=-\partial/\partial y^j\delta(\vec{y}-\vec{x})=-\partial_j\delta(\vec{y}-\vec{x})$. So, unlike $\delta$, the function $\partial_j\delta$ is odd in the sense that $\partial_j\delta(-\vec{x})=-\partial_j\delta(\vec{x})$. The only nontrivial brackets that have to be evaluated in the calculations above are the brackets of the fields with the derivatives of their canonical conjugates, such as
\be
[\pi_k(\vec{y}),\partial_j\psi_{l}(\vec{x})]_{GP}=\frac{\delta\partial_j\psi_{l}(\vec{x})}{\delta\psi_k(\vec{y})}=-\delta_{kl}\partial_j\delta(\vec{y}-\vec{x})=\delta_{kl}\partial_j\delta(\vec{x}-\vec{y}) .
\ee
The easiest way to obtain this result is to write the derivative as $\partial_j\psi_l(\vec{x})=\int\delta(\vec{y}-\vec{x})\delta_{lk}\partial_j\psi_k(\vec{y}) d^3y$ and then variate,
 $\delta\partial_j\psi_l(\vec{x})=\int\delta(\vec{y}-\vec{x})\delta_{lk}\partial/\partial y^j\delta\psi_k(\vec{y}) d^3y=
-\int \partial/\partial y^j\delta(\vec{y}-\vec{x})\delta_{lk}\delta\psi_k(\vec{y}) d^3y$, where in the last step the boundary term was omitted.

\vskip 0.2 in
\centerline{\bf Equations of motion}

Since the first class constraints are not present in the system, it follows that 
\be
\begin{aligned}
&H_T=H_E=H'=H+\int\left({\chi_1(\vec{x})U^1(\vec{x})+U^2(\vec{x})\chi_2(\vec{x})}\right)d^3x \\
&=\int\left[{\overline{\psi}\left({-i\gamma^j\partial_j+m}\right)\psi-
\left({\pi-\frac{i}{2}\overline{\psi}\gamma^0}\right)\gamma^0\left({\gamma^j\partial_j\psi+im\psi}\right)-
\left({\partial_j\overline{\psi}\gamma^j-im\overline{\psi}}\right)\gamma^0\left({\overline{\pi}+\frac{i}{2}\gamma^0\psi}\right)}\right]d^3x .
\end{aligned}
\ee
In the last form I dropped the argument $\vec{x}$ for simplicity. The equations of motion are
\be\label{EM}
\begin{aligned}
&\dot\psi=[\psi,H']_{GP}=-\gamma^0\left({\gamma^j\partial_j\psi+m\psi}\right),\\
&\dot{\overline{\psi}}=[\overline{\psi},H']_{GP}=-\left({\partial_j\overline{\psi}\gamma^j-im\overline{\psi}}\right)\gamma^0,\\
&\dot\pi=[\pi,H']_{GP}=-\partial_j\pi\gamma^0\gamma^j+im\pi\gamma^0,\\
&\dot{\overline{\pi}}=[\overline{\pi},H']_{GP}=-\gamma^j\gamma^0\partial_j\overline{\pi}-im\gamma^0\overline{\pi}.
\end{aligned}
\ee
The equalities of the form $[f,H']_{GP}=g$, where $f$ and $g$ are functions defined on space, should be understood as equalities of functions, i.e. $\forall \vec{x}, \ [f(\vec{x}),H']=g(\vec{x})$. 
The equations of motion (\ref{EM}) need to be supplemented with the constraints. It  is easy to see that all the equations then reduce to the Dirac equation
\be
\left({i\gamma^a\partial_a-m}\right)\psi=0 .
\ee

\subsection{The generalized Dirac bracket and the equal time anti--commutators of field operators}

Having the matrix $C_{1l\vec{x},2l'\vec{x}'}$ given by (\ref{C12}), one can seek for its inverse by imposing the conditions
\be
\begin{aligned}
&\sum_{k'=1}^4 \int C^{1k\vec{x},2k'\vec{x}'}C_{2k'\vec{x}',1l\vec{y}} \, d^3x'=
\sum_{k'=1}^4 \int C^{2k\vec{x},1k'\vec{x}'}C_{1k'\vec{x}',2l\vec{y}} \, d^3x'=
\delta_{kl}\delta(\vec{x}-\vec{y}) , \\
&\sum_{k'=1}^4 \int C^{1k\vec{x},1k'\vec{x}'}C_{1k'\vec{x}',2l\vec{y}} \, d^3x'=
\sum_{k'=1}^4 \int C^{2k\vec{x},2k'\vec{x}'}C_{2k'\vec{x}',1l\vec{y}} \, d^3x'=0 .
\end{aligned}
\ee
A unique solution is given by
\be
C^{1l\vec{x},2l'\vec{x}'}=C^{2l'\vec{x}',1l\vec{x}}=-i\gamma^0_{ll'}\delta(\vec{x}-\vec{x}') 
\ee
and the generalized Dirac bracket is
\be
\begin{aligned}
&[F,G]_{GD}=[F,G]_{GP}-\sum_{l,l'}\int\int
\left({[F,\chi_{1l\vec{x}}]_{GP}C^{1l\vec{x},2l'\vec{x}'}[\chi_{2l'\vec{x}'},G]_{GP}+
[F,\chi_{2l'\vec{x}'}]_{GP}C^{2l'\vec{x}',1l\vec{x}}[\chi_{1l\vec{x}},G]_{GP}}\right)d^3xd^3x' \\
&=[F,G]_{GP}+\sum_{l,l'} i\gamma^0_{ll'}\int
\left({
[F,\chi_{1l\vec{x}}]_{GP}[\chi_{2l'\vec{x}},G]_{GP}+
[F,\chi_{2l'\vec{x}}]_{GP}[\chi_{1l\vec{x}},G]_{GP}
}\right)d^3x
\end{aligned}
\ee
with $\chi_{1l\vec{x}}$ and $\chi_{2l\vec{x}}$ given by (\ref{chicomp}). The brackets of basic canonical variables can now be computed
\be\label{eqtime}
\begin{aligned}
&[\psi_l(\vec{x}),\psi_{l'}(\vec{x}')]_{GD}=[\pi_l(\vec{x}),\pi_{l'}(\vec{x}')]_{GD}=[\overline{\psi}_l(\vec{x}),\overline{\psi}_{l'}(\vec{x}')]_{GD}=[\overline{\pi}_l(\vec{x}),\overline{\pi}_{l'}(\vec{x}')]_{GD}=0,\\
&[\psi_l(\vec{x}),\pi_{l'}(\vec{x}')]_{GD}=[\pi_{l'}(\vec{x}'),\psi_l(\vec{x})]_{GD}=\frac{1}{2}\delta_{ll'}\delta(\vec{x}-\vec{x}'),\\
&[\overline{\psi}_l(\vec{x}),\overline{\pi}_{l'}(\vec{x}')]_{GD}=[\overline{\pi}_{l'}(\vec{x}'),\overline{\psi}_l(\vec{x})]_{GD}=-\frac{1}{2}\delta_{ll'}\delta(\vec{x}-\vec{x}'),\\
&[\psi_l(\vec{x}),\overline{\pi}_{l'}(\vec{x}')]_{GD}=[\overline{\psi}_l(\vec{x}),\pi_{l'}(\vec{x}')]_{GD}=0,\\
&[\psi_l(\vec{x}),\overline{\psi}_{l'}(\vec{x}')]_{GD}=[\overline{\psi}_{l'}(\vec{x}'),\psi_l(\vec{x})]_{GD}=-i\gamma^0_{ll'}\delta(\vec{x}-\vec{x}'),\\
&[\pi_l(\vec{x}),\overline{\pi}_{l'}(\vec{x}')]_{GD}=[\overline{\pi}_{l'}(\vec{x}'),\pi_l(\vec{x})]_{GD}=-\frac{i}{4}\gamma^0_{ll'}\delta(\vec{x}-\vec{x}').
\end{aligned}
\ee
The equal time anti--commutation relations of field operators can now be readily established according to the procedure
\be\label{quant}
[\widehat{F},\widehat{G}]_+=i\widehat{[F,G]}_{GD} ,
\ee 
which is the generalized version of (\ref{corr}) (note that $GP$ is now replaced by $GD$). The second class constraints (\ref{chi}) are interpreted as strong operator equations  in quantum theory and hence not all the basic field operators $\widehat{\psi}$, $\widehat{\overline{\psi}}$, $\widehat{\pi}$, $\widehat{\overline{\pi}}$ are independent. For example, one could chose to use $\widehat{\psi}$ and $\widehat{\overline{\psi}}$ as independent and interpret the remaining basic operators as derived from these in agreement with the constraints. Only the anti--commutation relations of $\psi$ and $\overline{\psi}$ are then necessary
\be\label{eqt}
\left[{\widehat{\psi}_l(\vec{x}),\widehat{\psi}_{l'}(\vec{x}')}\right]_+=0,\qquad
\left[{\widehat{\overline{\psi}}_l(\vec{x}),\widehat{\overline{\psi}}_{l'}(\vec{x}')}\right]_+=0,\qquad
\left[{\widehat{\psi}_l(\vec{x}),\widehat{\overline{\psi}}_{l'}(\vec{x}')}\right]_+=\gamma^0_{ll'}\delta(\vec{x}-\vec{x}') .
\ee
All the other anti--commutators can be derived from (\ref{eqt}). The reader is encouraged to verify that the results thus obtained agree with those that could be derived directly from  (\ref{eqtime}) through (\ref{quant}) for each bracket separately. This follows from the fact the the Dirac bracket is consistent with the second class constraints.

\subsection{How the anti--commutation relations should not be derived}

In many treatments of quantum field theory (e.g. \cite{Peskin}\cite{Wein}), the canonical analysis of the Dirac field begins with the introduction of the Hamiltonian formalism for field theory of systems that do not contain neither odd variables nor constraints (usually on the example of the Klein--Gordon field). Then the simplest Lagrangian density (\ref{Lsimp}) for the Dirac field is introduced,
for which one of the constraints
\be\label{wiaz}
\pi=i\overline{\psi}\gamma^0
\ee
is derived. The conclusion is then drawn that $\overline{\psi}$ is not an independent field, but rather a function of $\pi$. 

Although the references mentioned do not do this, it may be tempting, especially for a student who has not yet been introduced with the formalism of constraints or odd variables, to proceed with the derivation of the anti--commutation relations in the following way:
The usual PB for $\psi$ and $\overline{\psi}$
can be calculated, under the assumption that $\psi$ is expressed through $\pi$ according to (\ref{wiaz}). One could then try to obtain the anti--commutation relation for field operators via (\ref{quant}), with GDB replaced by PB. The result thus obtained coincides with (\ref{eqt}). Have we just circumvented the formalism of constraint systems and odd variables? Is then the concept of generalized Dirac bracket really necessary?

Although this ``derivation'' yields  correct anti--commutation relations (\ref{eqt}), it is important to stress that its correctness is conditioned by particular choice of Lagrangian density from the class of equivalence (i.e. $\mathcal{L}_{simp}$ (\ref{Lsimp}) and not, say, $\mathcal{L}_D$) (\ref{Sdir})), as well as particular ordering of arguments of the brackets (if the anti--commutation relation for $\overline{\psi}$ and $\psi$ was established in a way described above, then the result would obviously differ from the correct one by sign, due to the anti--symmetry of PB).

\subsection{Non--equal time commutation relations}

The non--equal time commutation relations for causal fields in flat Minkowski space are usually derived in a way that does not employ canonical formalism at all, but instead relies heavily on the Poincar\'e symmetry (see \cite{Wein} or \cite{Kazm5}). On the other hand, the derivation of (\ref{eqt}) presented here does not invoke the Poincar\'e symmetry at all. This is of importance, since the frequent usage of this symmetry in QFT is one of the most important obstacles for the straightforward extension of its apparatus  to the case of curved space--time, which may not posses any symmetries at all. One of the aims of this article is to show how far one can go with the quantization by using the canonical formalism and not the space--time symmetries.

To derive non--equal time anti--commutation relation for the Dirac field and its conjugate, consider the two space--time points $(x)=(t,\vec{x})$ and $(x')=(t+\tau,\vec{x}')$. Assuming the analyticity of $\psi$ in $t$, the GD for $\psi(x')$ and $\psi(x)$ can be represented by a series
\be\label{series}
\left[{\psi_{l'}(x'),\overline{\psi}_l(x)}\right]_{GD}=
\left[{\psi_{l'}(t+\tau,\vec{x}'),\overline{\psi}_l(t,\vec{x})}\right]_{GD}=
\sum_{n=0}^{\infty}\frac{\tau^n}{n!}
\left[{\psi^{(n)}_{l'}(t,\vec{x}'),\overline{\psi}_l(t,\vec{x})}\right]_{GD} ,
\ee
where $\psi^{(n)}$ is the $n$--th derivative of $\psi$ with respect to time. In order to shrink the series into something finite and simple, it is helpful to be organized.
I shall now prove the
\begin{theorem}
The following relation holds
\be
\psi^{(n)}(t,\vec{x})=\mathcal{B}^n\psi(t,\vec{x}) ,
\ee
where $\mathcal{B}^n$ denotes the $n$--th power of the matrix--differential operator
\be
\mathcal{B}:=-\gamma^0\gamma^j\partial_j-im\gamma^0 .
\ee
\end{theorem}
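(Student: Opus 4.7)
The plan is to proceed by induction on $n$, leveraging the Dirac equation obtained at the end of the previous subsection.

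For the base case $n=0$, the statement is a tautology since $\mathcal{B}^0$ is by convention the identity operator, so $\psi^{(0)}=\psi=\mathcal{B}^0\psi$. For $n=1$ I would invoke the equations of motion (\ref{EM}) supplemented by the primary constraints (\ref{chi}); as the author notes, the combined system is equivalent to the Dirac equation $(i\gamma^a\partial_a-m)\psi=0$. Multiplying on the left by $-i\gamma^0$ (using $(\gamma^0)^2=\mathds{1}$) one obtains
\be\label{dotpsiB}
\dot\psi=-\gamma^0\gamma^j\partial_j\psi-im\gamma^0\psi=\mathcal{B}\psi ,
\ee
which establishes the claim for $n=1$.

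For the inductive step, suppose $\psi^{(n)}(t,\vec{x})=\mathcal{B}^n\psi(t,\vec{x})$ for some $n\geq 1$. Then by definition
\be
\psi^{(n+1)}(t,\vec{x})=\partial_t\psi^{(n)}(t,\vec{x})=\partial_t\bigl(\mathcal{B}^n\psi(t,\vec{x})\bigr) .
\ee
The crucial observation is that $\mathcal{B}$ is built from spatial partial derivatives $\partial_j$ and the constant matrices $\gamma^0,\gamma^j$ (together with the constant coefficient $-im$), none of which depend on $t$. Hence $\mathcal{B}$ commutes with $\partial_t$ on any sufficiently smooth field (the analyticity in $t$ assumed in (\ref{series}) is more than enough to justify the Schwarz interchange of $\partial_t$ with each $\partial_j$ appearing in $\mathcal{B}^n$). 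Consequently
\be
\partial_t\bigl(\mathcal{B}^n\psi\bigr)=\mathcal{B}^n\bigl(\partial_t\psi\bigr)=\mathcal{B}^n\bigl(\mathcal{B}\psi\bigr)=\mathcal{B}^{n+1}\psi ,
\ee
where the middle equality uses (\ref{dotpsiB}). This closes the induction.

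The argument is essentially routine; the only subtlety worth flagging is verifying that $\dot\psi=\mathcal{B}\psi$ really does hold in the form (\ref{dotpsiB}), since the EM system (\ref{EM}) only yields this once the second--class constraints (\ref{chi}) are imposed. Fortunately the text already performs that reduction to the Dirac equation, and from $(i\gamma^a\partial_a-m)\psi=0$ the solved form for $\partial_0\psi$ follows by a single matrix multiplication. No interchange of the generalized Dirac bracket with time derivatives is needed at this stage, since the statement concerns $\psi^{(n)}$ itself rather than brackets of $\psi^{(n)}$ with other fields.
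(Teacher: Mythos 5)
Your proof is correct, but it takes a genuinely different and more elementary route than the paper. You observe that once $\dot\psi = \mathcal{B}\psi$ is in hand (from the Dirac equation, equivalently from the first line of (\ref{EM}) together with the constraints), the result follows by repeatedly applying $\partial_t$ and noting that $\mathcal{B}$ is a constant-coefficient spatial operator that commutes with $\partial_t$. This is clean and needs no machinery beyond the equations of motion. The paper instead carries out the inductive step entirely inside the Hamiltonian bracket formalism: it writes $\psi^{(n+1)}(\vec{x}) = [\psi^{(n)}(\vec{x}), H']_{GP}$, evaluates the generalized Poisson bracket via functional derivatives, and in doing so establishes the identity $\delta\bigl(\mathcal{B}^n\psi(\vec{x})\bigr)/\delta\psi(\vec{z}) = \widetilde{\mathcal{B}^n}\delta(\vec{z}-\vec{x})$ (equation (\ref{deltaB})), where $\widetilde{X}$ flips the sign of even-order derivative terms. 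That extra work is not wasted: (\ref{deltaB}) is reused immediately afterward in evaluating $\bigl[\psi^{(n)}_{l'}(\vec{x}'),\overline{\psi}_l(\vec{x})\bigr]_{GD}$, so the paper's longer proof pays for itself. Your approach proves the theorem as stated more economically, but if you followed it you would still need to derive (\ref{deltaB}) separately for the next step of the argument. One small remark: the first line of (\ref{EM}) as printed reads $-\gamma^0(\gamma^j\partial_j\psi + m\psi)$, which is missing a factor of $i$ on the mass term; your derivation of $\dot\psi = \mathcal{B}\psi$ directly from $(i\gamma^a\partial_a - m)\psi = 0$ gets the correct $-im\gamma^0\psi$, consistent with the definition of $\mathcal{B}$ in the theorem.
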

\begin{proof}
I shall use induction. For $n=0$ the result is trivially true and for $n=1$ it follows from the first equation of (\ref{EM}). I will show that for each $n\in\mathbb{N}$ the inductive assumption
\be\label{assum}
\psi^{(n)}(\vec{x})=\mathcal{B}^n\psi(\vec{x})
\ee
implies that
\be\label{thesis}
\psi^{(n+1)}(\vec{x})=\mathcal{B}^{n+1}\psi(\vec{x}) .
\ee
Note that
\be\label{niewiem}
\psi^{(n+1)}(\vec{x})=[\psi^{(n)}(\vec{x}),H']_{GP}=
\int\frac{\delta \mathcal{B}^n\psi(\vec{x})}{\delta\psi(\vec{z})}\frac{\delta H'}{\delta \pi(\vec{z})}d^3z=
\int\frac{\delta \mathcal{B}^n\psi(\vec{x})}{\delta\psi(\vec{z})}\mathcal{B}\psi(\vec{z})d^3z .
\ee
In the second step the inductive assumption (\ref{assum}) was used.
Since
\be
\mathcal{B}^n\psi(\vec{x})=\int \delta(\vec{z}-\vec{x})\mathcal{B}^n\psi(\vec{z}) d^3z=
\int \widetilde{\mathcal{B}^n}\delta(\vec{z}-\vec{x})\psi(\vec{z}) d^3z ,
\ee
where $\widetilde{X}$ denotes the differential operator obtained from $X$ by the change of signs in front of the terms that are of even rank in derivatives (e.g. $\widetilde{\mathcal{B}}=\gamma^0\gamma^j\partial_j-im\gamma^0$), it follows that
\be\label{deltaB}
\frac{\delta \mathcal{B}^n\psi(\vec{x})}{\delta\psi(\vec{z})}=\widetilde{\mathcal{B}^n}\delta(\vec{z}-\vec{x}).
\ee
Inserting this result to (\ref{niewiem}) leads to
\be
\psi^{(n+1)}(\vec{x})=\int\widetilde{\mathcal{B}^n}\delta(\vec{z}-\vec{x})\mathcal{B}\psi(\vec{z})d^3z=
\int\delta(\vec{z}-\vec{x})\mathcal{B}^{n+1}\psi(\vec{z})d^3z=\mathcal{B}^{n+1}\psi(\vec{x}) ,
\ee
which proves the thesis (\ref{thesis}).
\end{proof}
Let us now calculate the bracket that occurs under the summation sign in (\ref{series}). Using (\ref{GDB}) with $\vec{x}$ replaced by $\vec{z}$ and $l,l'$ by $k,k'$ in order to avoid conflicting indices, one gets
\be
\begin{aligned}
&\left[{\psi^{(n)}_{l'}(\vec{x}'),\overline{\psi}_l(\vec{x})}\right]_{GD}=
i\gamma^0_{kk'}\int 
\left[{\psi^{(n)}_{l'}(\vec{x}'),\chi_{1k\vec{z}}}\right]_{GP} 
\left[{\chi_{2k'\vec{z}},\overline{\psi}_l(\vec{x})}\right]_{GP}\\
&=-i\gamma^0_{kl}\left[{\psi^{(n)}_{l'}(\vec{x}'),\pi_k(\vec{x})}\right]_{GP}=
-i\gamma^0_{kl}\frac{\delta\psi^{(n)}_{l'}(\vec{x}')}{\delta\psi_k(\vec{x})}=
-i\gamma^0_{kl}\frac{\delta\left({\mathcal{B}^n\psi(\vec{x}')}\right)_{l'}}{\delta\psi_k(\vec{x})} \quad.
\end{aligned}
\ee
Using (\ref{deltaB}) and $\widetilde{\mathcal{B}^n}\delta(\vec{z}-\vec{x})=\mathcal{B}^n\delta(\vec{x}-\vec{z})$, which holds on account of the fact that
even derivatives of the Dirac delta are even functions and odd derivatives are odd functions, one finally gets
\be\label{psinbpsi}
\left[{\psi^{(n)}_{l'}(\vec{x}'),\overline{\psi}_l(\vec{x})}\right]_{GD}=-i\left({\mathcal{B}^n\delta(\vec{x}'-\vec{x})\gamma^0}\right)_{l'l} .
\ee
In order to shrink (\ref{series}) to the finite expression, I need to prove yet another
\begin{theorem}\label{th2}
The following identity holds
\be\label{theorem2}
-i\sum_{n=0}^{\infty}\frac{\tau^n}{n!}\mathcal{B}^n\delta(\vec{x})\gamma^0=
-i\left({i\gamma^a\partial_a+m}\right)\triangle (\tau,\vec{x}) ,
\ee
where
\be\label{triangle}
\begin{aligned}
&\triangle(t,\vec{x}):=
\int d\Gamma_p\left({e^{-iE_pt}e^{i\vec{p}\vec{x}}-e^{iE_pt}e^{-i\vec{p}\vec{x}}}\right) ,\\
&d\Gamma_p:=\frac{d^3p}{(2\pi)^32E_p},\qquad E_p:=\sqrt{{\vec{p}\,}^2+m^2}\quad.
\end{aligned}
\ee
\end{theorem}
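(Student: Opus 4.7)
My plan is to Fourier transform both sides in $\vec{x}$ and compare the momentum-space integrands term by term. Writing $\delta(\vec{x})=\int\frac{d^3p}{(2\pi)^3}\,e^{i\vec{p}\cdot\vec{x}}$, the operator $\mathcal{B}$ acts on each plane wave $e^{i\vec{p}\cdot\vec{x}}$ by multiplication with the $4\times 4$ matrix
\[
M(\vec{p}):=-i\gamma^0(\gamma^j p_j+m),
\]
so that the entire series on the left-hand side collapses into the matrix exponential $e^{\tau M(\vec{p})}$:
\[
-i\sum_{n=0}^{\infty}\frac{\tau^n}{n!}\mathcal{B}^n\delta(\vec{x})\gamma^0=-i\int\frac{d^3p}{(2\pi)^3}\,e^{\tau M(\vec{p})}\gamma^0\,e^{i\vec{p}\cdot\vec{x}}.
\]

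The central step is to evaluate this matrix exponential in closed form. Using only the Clifford relations $\{\gamma^a,\gamma^b\}=2\eta^{ab}$ (in particular $\gamma^0\gamma^j=-\gamma^j\gamma^0$, $(\gamma^0)^2=\mathbf{1}$ and $\gamma^j\gamma^k p_jp_k=-\vec{p}^{\,2}$), a short algebraic check yields $M(\vec{p})^2=-E_p^{\,2}$, so the series closes trigonometrically into
\[
e^{\tau M(\vec{p})}=\cos(E_p\tau)+\frac{\sin(E_p\tau)}{E_p}M(\vec{p}).
\]
An analogous manipulation gives $M(\vec{p})\gamma^0=i(\gamma^j p_j-m)$, which turns the left-hand side of (\ref{theorem2}) into the explicit momentum integral
\[
-i\int\frac{d^3p}{(2\pi)^3}\Bigl[\cos(E_p\tau)\gamma^0+\frac{i\sin(E_p\tau)}{E_p}(\gamma^j p_j-m)\Bigr]e^{i\vec{p}\cdot\vec{x}}.
\]

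For the right-hand side I would simply differentiate $\triangle(\tau,\vec{x})$ under the integral sign using the definition (\ref{triangle}), obtaining four elementary momentum integrals from $i\gamma^0\partial_\tau$, $i\gamma^j\partial_j$ and $m$. In every negative-frequency piece (the one carrying $e^{iE_p\tau}e^{-i\vec{p}\cdot\vec{x}}$) the substitution $\vec{p}\mapsto-\vec{p}$, under which $E_p$ is invariant, combines the two exponentials into $\cos(E_p\tau)\,e^{i\vec{p}\cdot\vec{x}}$ and $\sin(E_p\tau)\,e^{i\vec{p}\cdot\vec{x}}$ with the correct coefficients, reproducing exactly the integrand just derived for the left-hand side. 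The main obstacle is purely bookkeeping: the Clifford-algebra identities $M^2=-E_p^{\,2}$ and $M\gamma^0=i(\gamma^j p_j-m)$ must be established with the right signs, and the folding $\vec{p}\mapsto-\vec{p}$ of the negative-frequency integral has to be carried out consistently; no new idea beyond Fourier inversion and summation of the exponential series is needed.
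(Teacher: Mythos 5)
Your proof is correct, but it takes a genuinely different route from the paper. The paper stays in position space: it establishes a separate Lemma, $(\partial_j\partial_j-m^2)^k\delta(\vec{x})=i\triangle^{(2k+1)}(0,\vec{x})$, then uses $\mathcal{B}^2=\partial_j\partial_j-m^2$ to split the series into even and odd powers of $n$, each of which is recognized as the Taylor series of $\dot\triangle(\tau,\vec{x})\gamma^0$ and $\mathcal{B}\gamma^0\triangle(\tau,\vec{x})$ respectively. Your approach instead Fourier-transforms $\delta(\vec{x})$ from the outset, diagonalizes $\mathcal{B}$ as multiplication by $M(\vec{p})=-i\gamma^0(\gamma^j p_j+m)$, and closes the exponential series algebraically via $M^2=-E_p^2$, after which the identity becomes an elementary comparison of momentum-space integrands with the $\vec{p}\mapsto-\vec{p}$ fold in the negative-frequency piece. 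Both arguments hinge on the same algebraic fact — the Clifford relation forcing $\mathcal{B}^2=\partial_j\partial_j-m^2$, which in your picture is $M^2=-E_p^2$ — but yours bypasses the Lemma entirely and replaces the position-space resummation with a single matrix exponential, which is arguably more systematic and easier to check (the sign-bookkeeping in $M\gamma^0=i(\gamma^j p_j-m)$ and the negative-frequency fold verify without trouble). The paper's route, by contrast, keeps $\triangle$ in view throughout as a function whose Taylor coefficients at $t=0$ are being reconstructed, which makes the connection to the equal-time data slightly more transparent. Either is a complete proof.
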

\begin{proof}
It is important to stress that the three--vector $\vec{p}=(p^1,p^2,p^3)$ is just an integration variable, which could be called in any other way, i.e. no interpretation of $\vec{p}$ as particle's momentum is necessary.
To prove the theorem, I will first introduce the
\begin{lemma}
For any $k\in\mathbb{N}$ the following identity holds
\be\label{lemma}
(\partial_j\partial_j-m^2)^k \delta(\vec{x})=i\triangle^{(2k+1)}(0,\vec{x}) .
\ee
\end{lemma}
\begin{proof}
Straightforward calculation shows that
\be
(\partial_j\partial_j-m^2)\left({e^{i\vec{p}\vec{x}}+e^{-i\vec{p}\vec{x}}}\right)
=(iE_p)^2\left({e^{i\vec{p}\vec{x}}+e^{-i\vec{p}\vec{x}}}\right) ,
\ee
which easily generalizes to
\be\label{pom}
(\partial_j\partial_j-m^2)^k\left({e^{i\vec{p}\vec{x}}+e^{-i\vec{p}\vec{x}}}\right)
=(iE_p)^{2k}\left({e^{i\vec{p}\vec{x}}+e^{-i\vec{p}\vec{x}}}\right) .
\ee
Decomposing the Dirac delta as
\be
\delta(\vec{x})=\frac{1}{2(2\pi)^3}\int \left({e^{i\vec{p}\vec{x}}+e^{-i\vec{p}\vec{x}}}\right) d^3p ,
\ee
acting on it by $(\partial_j\partial_j-m^2)^k$ and using (\ref{pom}) yields
\be\label{pom1}
(\partial_j\partial_j-m^2)^k \delta(\vec{x})=
-i\int (iE_p)^{2k+1}\left({e^{i\vec{p}\vec{x}}+e^{-i\vec{p}\vec{x}}}\right) d\Gamma_p .
\ee
On the other hand, differentiation of (\ref{triangle}) gives
\be
\triangle^{(2k+1)}(t,\vec{x})=-\int (iE_P)^{2k+1} 
\left({e^{-iE_pt}e^{i\vec{p}\vec{x}}+e^{iE_pt}e^{-i\vec{p}\vec{x}}}\right) d\Gamma_p
\ee
which, for $t=0$, coincides with (\ref{pom1}), thus proving (\ref{lemma}).
\end{proof}
Having the Lemma proved, it is easy to prove the Theorem by summing (\ref{theorem2}) over even and odd values of $n$ separately. Since
\be
\mathcal{B}^2=-\gamma^i\gamma^j\partial_i\partial_j-m^2=\partial_j\partial_j-m^2
\ee
(use $[\gamma^a,\gamma^b]_+=2\eta^{ab}$, where $\eta=diag(1,-1,-1,-1)$), it follows that
\be
\mathcal{B}^{2k}\delta(\vec{x})=(\partial_j\partial_j-m^2)^k\delta(\vec{x})=i\triangle^{(2k+1)}(0,\vec{x}),\qquad
\mathcal{B}^{2k+1}\delta(\vec{x})=\mathcal{B}i\triangle^{(2k+1)}(0,\vec{x})
\ee
and the LHS of (\ref{theorem2}) is
\be
\begin{aligned}
&-i\sum_{n=0}^{\infty}\frac{\tau^n}{n!}\mathcal{B}^n\delta(\vec{x})\gamma^0=
\sum_{k=0}^{\infty}\frac{\tau^{2k+1}}{(2k+1)!}\mathcal{B}\triangle^{2k+1}(0,\vec{x})\gamma^0+
\sum_{k=0}^{\infty}\frac{\tau^{2k}}{(2k)!}\dot{\triangle}^{(2k)}(0,\vec{x})\gamma^0\\
&=\mathcal{B}\gamma^0\triangle(\tau,\vec{x})+\dot{\triangle}(\tau,\vec{x})\gamma^0=
-i\left({i\gamma^a\partial_a+m}\right)\triangle(\tau,\vec{x}).
\end{aligned}
\ee
\end{proof}
Insertion of (\ref{psinbpsi}) into (\ref{series}) and subsequent application of Theorem (\ref{th2}) gives non--equal time anti--commutation relation
\be\label{eq}
\left[{\widehat{\psi}_l(x),\widehat{\overline{\psi}}_{l'}(x')}\right]_+=
(i\gamma^a\partial_a+m)_{ll'}\triangle(x-x')
\ee
or, equivalently,
\be
\left[{\widehat{\psi}_l(x),{\widehat{\psi}}^{\dag}_{l'}(x')}\right]_+=
\left\{{(i\gamma^a\partial_a+m)\gamma^0}\right\}_{ll'}\triangle(x-x') .
\ee
It is important to stress that the formula (\ref{series}) already shows that the form of non--equal time anti--commutation relations is fully determined by the canonical Hamiltonian formalism. Neither the Lorentz symmetry of the Dirac field nor the symmetries of space--time are needed to obtain this relations. All the work done afterwards was aimed to simplify the formula for the anti--commutator of $\psi$ and $\overline{\psi}$ and here the symmetries were certainly helpful. However, even if that kind of simplification was not possible for technical reasons (e.g. in the generic curved space), the series formula similar to (\ref{series}) would still be obtainable and could be used in principle to finally compute measurable quantities from the theory.

\subsection{Quantization}

To quantize the theory, it is necessary to established the (anti)commutation relations between the basic field operators and all the other operators that represent physically important observables. Then it only remains to find a representation of these relations in a Hilbert space.
The anti--commutation relations between the basic canonical operators have been established. The most important observable is the energy represented by the Hamiltonian. But which Hamiltonian? Since there are no first class constraints in the system, only $H'$ and $H$ are at our disposal. But they differ by second class constraints which strongly vanish in the quantum theory. Hence, the operators corresponding to $H$ and $H'$ are equal. Recall from (\ref{dotFDB}) that the equations for time evolution (\ref{EM}) can be rewritten in terms of GDB, instead of GP, e.g.
\be
\dot\psi\approx[\psi,H]_{GDB}\approx-\gamma^0\left({\gamma^j\partial_j\psi+m\psi}\right)
\ee
(the prime at $H$ was omitted, since the GDB of any variable with a linear combination of second class constraints weakly vanishes). Then the application of (\ref{quant}) and setting second class constraints to zero yields the operator version of the Dirac equation
\be\label{DEop}
\left({i\gamma^a\partial_a-m}\right)\widehat{\psi}=0 .
\ee
The equation is sufficiently simple that a general solution can be given. This is very pragmatical, since then, if $\widehat{\psi}$ assumes a form of general solution to (\ref{DEop}), the commutation relations of $\psi$ and $\overline{\psi}$ with the Hamiltonian are automatically satisfied and it only remains to impose the equal time anti--commutation relations of fields (\ref{eqt}) (and also the commutation relations of fields with other observables such as momentum and electric charge, which I shall not discuss here). 

Acting by $i\gamma^b\partial_b+m$ on the LHS of (\ref{DEop}) yields the Klein--Gordon equation,
$(\square +m^2)\widehat{\psi}(x)=0$, from which it follows that $\widehat{\psi}$ is necessarily of the form
\be\label{psisol}
\widehat{\psi}(x)=\int\left({e^{-ip\cdot x}A(\vec{p})+e^{ip\cdot x}B(\vec{p})}\right) d\Gamma_p ,
\ee
where $p\cdot x=p^0 t-\vec{p}\cdot \vec{x}$. Here and below in this article it is assumed that $p^0=E_p$ (we are on the mass shell). For fixed $\vec{p}$, think of $A(\vec{p})$ and $B(\vec{p})$ as four--component columns whose entires are operators on a Hilbert space which has not yet been defined. The necessary and sufficient conditions for (\ref{psisol}) to provide a solution to (\ref{DEop}) are
\be
\begin{aligned}
&a)\quad (\not\!{p}-m)A(\vec{p})=0,\\
&b) \quad (\not\!{p}+m)B(\vec{p})=0 ,
\end{aligned}
\ee
where $\not\!{p}=p_a\gamma^a$. For fixed $\vec{p}$, the space of solutions for these equations are given by two--dimensional subspaces of $\mathbb{C}^4$ whose bases will be denoted by $u(\vec{p},\sigma)$ for $a)$ and $v(\vec{p},\sigma)$ for $b)$, where $\sigma$ numbers the two basis vectors and conventionally takes values $1/2$ and $-1/2$. The general solution of (\ref{DEop}) can thus be written as
\be\label{psiaac}
\widehat{\psi}(x)=\int\left({e^{-ip\cdot x}u_{\sigma}(\vec{p})a_{\sigma}(\vec{p})+e^{ip\cdot x}v_{\sigma}(\vec{p})a^{c\dag}_{\sigma}(\vec{p})}\right) d\Gamma_p ,
\ee
where $a_{\sigma}(\vec{p})$ and $a^{c\dag}_{\sigma}(\vec{p})$ are now completely arbitrary operator--valued functions of $\vec{p}$. Although these functions are arbitrary from the viewpoint of the Dirac equation, the anti--commutation relations for $\psi$, $\dot{\psi}$, $\overline{\psi}$ and $\dot{\overline{\psi}}$ that follow from the canonical formalism uniquely determine the anti--commutation relations between  $a_{\sigma}(\vec{p})$, $a^{c\dag}_{\sigma}(\vec{p})$ and their Hermitian conjugates. To see this, note that (\ref{psiaac}) can be inverted to yield
\be
\begin{aligned}
&u_{\sigma}(\vec{p})a_{\sigma}(\vec{p})=
\int e^{ip\cdot x}\left({E_p\widehat{\psi}(x)+i\dot{\widehat{\psi}}(x)}\right)d^3x \\
&v_{\sigma}(\vec{p})a^{c\dag}_{\sigma}(\vec{p})=
\int e^{-ip\cdot x}\left({E_p\widehat{\psi}(x)-i\dot{\widehat{\psi}}(x)}\right)d^3x \\
&u^*_{\sigma}(\vec{p})a^{\dag}_{\sigma}(\vec{p})=
\int e^{-ip\cdot x}\left({E_p\widehat{\psi}^{\dag}(x)-i\dot{\widehat{\psi}}^{\dag}(x)}\right)d^3x \\
&v^*_{\sigma}(\vec{p})a^{c}_{\sigma}(\vec{p})=
\int e^{ip\cdot x}\left({E_p\widehat{\psi}^{\dag}(x)+i\dot{\widehat{\psi}}^{\dag}(x)}\right)d^3x ,
\end{aligned}
\ee
where $a^c(\vec{p}):=\left({a^{c\dag}(\vec{p})}\right)^{\dag}$ and ${}^*$ means complex conjugation. Since the basis vectors $u_{\sigma}(\vec{p})$ for different values of $\sigma$ are linearly independent (the same concerns $v$, $u^*$ and $v^*$), these equations specify uniquely the coefficients 
 $a_{\sigma}(\vec{p})$, $a^{c\dag}_{\sigma}(\vec{p})$ and their conjugates in terms of field operators whose anti--commutation relations are known from the canonical analysis. One can seek for the choice of bases $u$ and $v$ which makes the anti--commutation relations between $a$'s particularly simple. For example, the choice proposed in \cite{Kazm5} leads to 
\be
\begin{aligned}
&\left[{a_{\sigma}(\vec{p}),a^{\dag}_{\sigma'}(\vec{p}\,')}\right]_+=
\left[{a^c_{\sigma}(\vec{p}),a^{c\dag}_{\sigma'}(\vec{p}\,')}\right]_+=
(2\pi)^32E_p\delta_{\sigma\sigma'}\delta(\vec{p}-\vec{p}\,'),\\
&\left[{a_{\sigma}(\vec{p}),a_{\sigma'}(\vec{p}\,')}\right]_+=
\left[{a^c_{\sigma}(\vec{p}),a^{c}_{\sigma'}(\vec{p}\,')}\right]_+=
\left[{a_{\sigma}(\vec{p}),a^{c\dag}_{\sigma'}(\vec{p}\,')}\right]_+=
\left[{a^c_{\sigma}(\vec{p}),a^{\dag}_{\sigma'}(\vec{p}\,')}\right]_+=0 .
\end{aligned}
\ee
These relations are so simple that they can be readily represented in the appropriate Hilbert space of many--particle states by the Fock quantization, which is described in all textbooks on quantum field theory. The commutation relations of $a$'s with the Hamiltonian support the interpretation of  $a^{\dag}_{\sigma}(\vec{p})$ and $a^{c\dag}_{\sigma}(\vec{p})$ as the operators that create particles of energy $E_p$ and their conjugates as annihilation operators. If the commutation relations with all the components of the energy--momentum tensor were investigated, the interpretation of the argument $\vec{p}$ as particle's momentum could be supported. The index $\sigma$ could be interpreted as the projection of particle's spin on the quantization axes in the particle's rest frame if the commutation relations of $a$'s with the components of spin density tensor were inspected. Finally, establishing the commutation relations with the electric charge operator allow for the interpretation of the superscript ${}^c$ as denoting anti--particles.

The reader might feel that the Poincar\'e symmetry that had been avoided so conscientiously throughout was finally employed. After all, both the energy--momentum tensor and the angular momentum tensor (including spin density part) are composed of Noether conserved currents that are related to the Poincar\'e symmetry of the action. I have just written that the commutation relations of creation and annihilation operators with momenta and spin operators are needed if the physical interpretation of $\vec{p}$ and $\sigma$ is to be established.

Although it is certainly true that the conserved quantities mentioned above can be obtained by Noether procedure from the Poincar\'e global symmetry, it does not mean that they can not be obtained differently. In general relativity, the energy momentum tensor of matter $t^{\mu\nu}$ is conventionally obtained by varying $\sqrt{|det(g)|}S_M$ with respect to the inverse space--time metric $g^{\mu\nu}$, were $S_M$ is the matter part of the action. This procedure is well defined in any space--time, no matter what symmetries it might posses, and yields the results that are compatible with Noether--Bellinfante method in flat space. There is a problem with spin density, though. If, however, gravity is interpreted as a Yang--Mills theory of the Poincar\'e group, than both the energy--momentum tensor and spin density tensor can be obtained by varying the matter action with respect to the tetrad and the connection (see \cite{Kazm1} for the details). Hence, the Poincar\'e symmetry of space--time is not really necessary. All the steps of the canonical quantization of the Dirac field performed in this article could be accomplished in principle in curved space as well. What is really problematic in curved space are all the technical complications. For example, it is not possible to find explicit solutions to the Dirac equation in most curved space--times.

\section{Conclusions}

The general canonical formalism described in Section \ref{intr} was successfully applied to the theory of the Dirac field in Section \ref{dirac}. It was argued that only after the Grassman odd variables are introduced the consistent canonical quantization of fermionic fields may be possible. Also, the Lagrangian for the Dirac field was shown to lead to the presence of constraints in the theory, which altogether made the application of a concept of generalized Dirac bracket necessary. It was argued that the Poincar\'e symmetry or the causal structure of space--time does not have to be involved in the program of quantization if the canonical method is consequently followed.

\section*{Acknowledgements}
I wish to thank W. Kaminski and J. Lewandowski for helpful comments.
This work was supported by grant N N202 287038.

\end{document}